\definecolor{darkblue}{rgb}{0.121,0.47,0.705}
\newcommand{\bl}{\color{darkblue}}
\definecolor{dark purple}{rgb}{0.415,0.239,0.603}
\newcommand{\pu}{\color{dark purple}}
\definecolor{dark orange}{rgb}{1,0.498,0}
\let\emph\relax
\DeclareTextFontCommand{\emph}{\bl\em}
\let\doendproof\endproof\renewcommand\endproof{~\hfill$\qed$\doendproof}
\DeclarePairedDelimiter\set{\{}{\}}
\DeclarePairedDelimiter\abs{\lvert}{\rvert}
\DeclarePairedDelimiter\croc{\langle}{\rangle}
\def\Oh{\ensuremath{\mathcal{O}}}
\newcommand{\VTT}{\textsc{VTT}\xspace}
\newcommand{\FTT}{\textsc{FTT}\xspace}
\crefname{observation}{Observation}{Observations}
\newcommand{\etal}{{et~al.}}
\renewcommand{\orcidID}[1]{\href{https://orcid.org/#1}{\includegraphics[scale=.03]{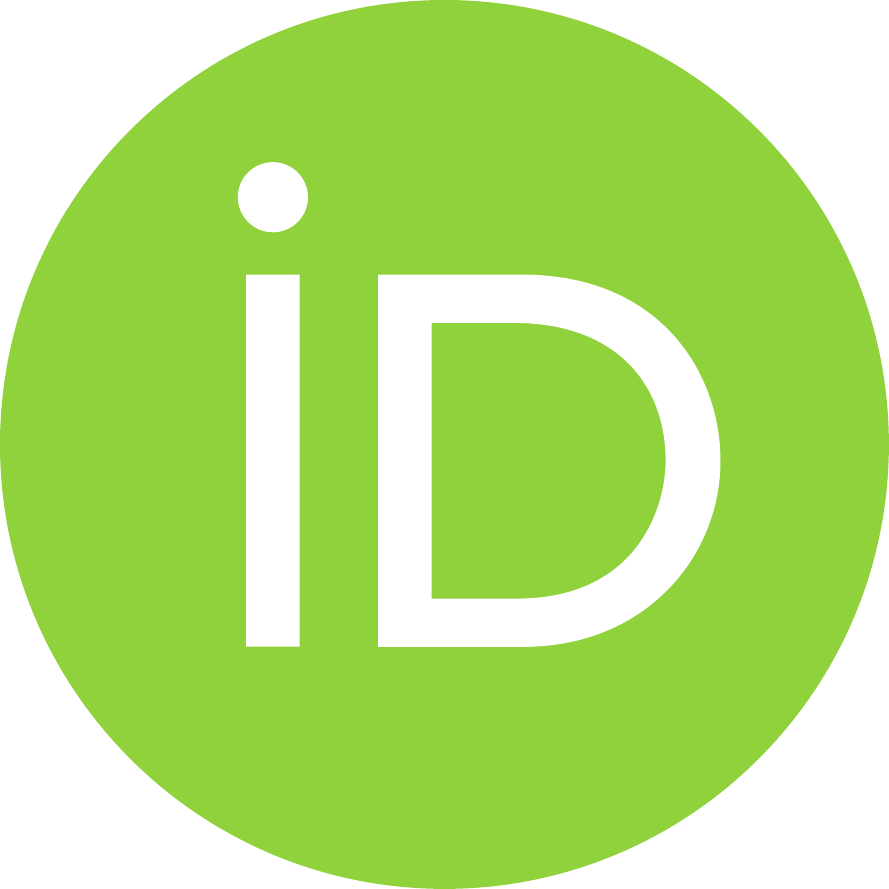}}}
\title{Visualizing Multispecies Coalescent~Trees:
  Drawing~Gene Trees Inside Species Trees}
\titlerunning{Visualizing Multispecies Coalescent Trees}
\author{Jonathan~Klawitter\inst{1,2}\orcidID{0000-0001-8917-5269} 
\and Felix~Klesen\inst{1}\orcidID{0000-0003-1136-5673} 
\and Moritz~Niederer\inst{3} 
\and Alexander~Wolff\inst{1}\orcidID{0000-0001-5872-718X}}
\authorrunning{J.~Klawitter et al.}
\institute{Universität Würzburg, Germany 
\and University of Auckland, New Zealand 
\and HTW Saar, Germany}
\begin{document}

\maketitle        

\pdfbookmark[1]{Abstract}{Abstract} 
\begin{abstract}
  We consider the problem of drawing multiple gene trees inside a
  single species tree in order to visualize multispecies coalescent
  trees.  Specifically, the drawing of the species tree fills a
  rectangle in which each of its edges is represented by a smaller
  rectangle, and the gene trees are drawn as rectangular cladograms
  (that is, orthogonally and downward, with one bend per edge) inside
  the drawing of the species tree.  As an alternative, we also
  consider a style where the widths of the edges of the species tree
  are proportional to given effective population sizes.
  
  In order to obtain readable visualizations, our aim is to minimize
  the number of crossings between edges of the gene trees in such
  drawings.  We show that planar instances can be recognized in
  linear time and that the general problem is NP-hard.
  Therefore, we introduce
  two heuristics and give an integer linear programming (ILP)
  formulation that provides us with exact solutions in exponential
  time.  We use the ILP to measure the quality of the heuristics on
  real-world instances.  The heuristics yield surprisingly good
  solutions, and the ILP runs surprisingly fast.
\end{abstract}

\section{Introduction}
Visualizations of trees to present information have been used for centuries~\cite{Lim14} and 
the study of producing readable, compact representations of trees has a long tradition~\cite{Treevis,Rus13}. 
Trees and their drawings are also an ubiquitous and fundamental tool in the field of phylogenetics. 
In particular, a phylogenetic tree is used
\begin{wrapfigure}[11]{r}{5cm}
	\vspace{-0.65cm}
	\centering
	\includegraphics[page=1]{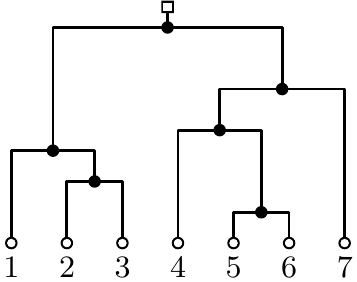}
	\caption{A rectangular cladogram drawing of a rooted binary phylogenetic tree on seven taxa.}
	\label{fig:phyTreeExamples:cladogram}
\end{wrapfigure}
to model the evolutionary history
and relationships of a set $X$ of taxa such as species, genes, or languages~\cite{SS03}.
There exist many different models, but most commonly a \emph{phylogenetic tree} on $X$ is a tree $T$
whose leaves are bijectively labeled with~$X$;
see \cref{fig:phyTreeExamples:cladogram}.
In a {\em rooted} phylogenetic tree, 
each internal vertex represents a branching event (such as species divergence);
time (or genetic distance) is represented by the edge lengths\pagebreak[1] 
from the root towards the leaves. 
In most\pagebreak[4] applications, the tree is \emph{binary}, that is,
each internal vertex has indegree one, outdegree two, and thus represents a bifurcation event.
An {\em unrooted} phylogenetic tree, on the other hand, models only the relatedness of the taxa.
A phylogenetic tree where the taxa are species is called a \emph{species tree}.
If the taxa are biological sequences, such as particular genes or protein sequences, the tree is called a \emph{gene~tree}. 

\paragraph{Multispecies coalescent models.}
One of the main tasks in phylogenetics is the inference of 
a phylogenetic tree for some given data and model.
When inferring a species tree based on sequencing data, one might be inclined 
to set the species tree as that of an inferred gene tree.
However, gene trees can differ from the species tree 
in the presence of so-called \textit{incomplete lineage sorting}\footnote{We
speak of \emph{incomplete lineage sorting} if 
(i)~in a population of an ancestral species
two (or more) variants of a gene were present, say red and blue, and 
(ii)~when the species diverged, this did not result in one child species 
having the red variant and the other having the blue variant,
but, e.g., one child species having both~variants~\cite{SS20}.} 
or when divergence times are small\footnote{A 
small divergence time corresponds to a short edge in the phylogenetic tree,
which can be hard to infer correctly.},
which can lead to inaccurate edge lengths or even to an incorrectly inferred species tree~\cite{PN88,AEWBS02,MH16,SS20}.
To address these issues, \emph{multispecies coalescent (MSC) models} have been developed.
An MSC model provides a framework for inferring species trees 
while accounting for conflicts between gene trees and species trees~\cite{HD09,FJRY18,RELY20}.
Roughly speaking, by using multiple samples (genes) per species, 
the model coestimates multiple gene trees that are constrained
within their shared species tree.
In doing so, the model can infer not only divergence times for inner vertices
but also the {\em effective population size} for each edge (\emph{branch}) in the species tree.
There exist several models for population sizes~\cite{WWB03}, two of which we define here. 
In the \emph{continuous linear model}, for each branch,
the population size between the top and the bottom is linearly interpolated,
and for a branch not incident to a leaf,
the population size at the bottom equals 
the sum of the population sizes at the top of its two child branches; see~\cref{fig:mscExamples:continuous}.
In the \emph{piecewise constant model}, the population size of each branch
is constant from the top to the bottom of the branch and 
there are no restrictions between adjacent branches~\cite{Dou20}; see~\cref{fig:mscExamples:constant}.

\begin{figure}[htb]
  \centering
    \begin{subfigure}[t]{0.45 \linewidth}
		\centering
 		\includegraphics[page=2]{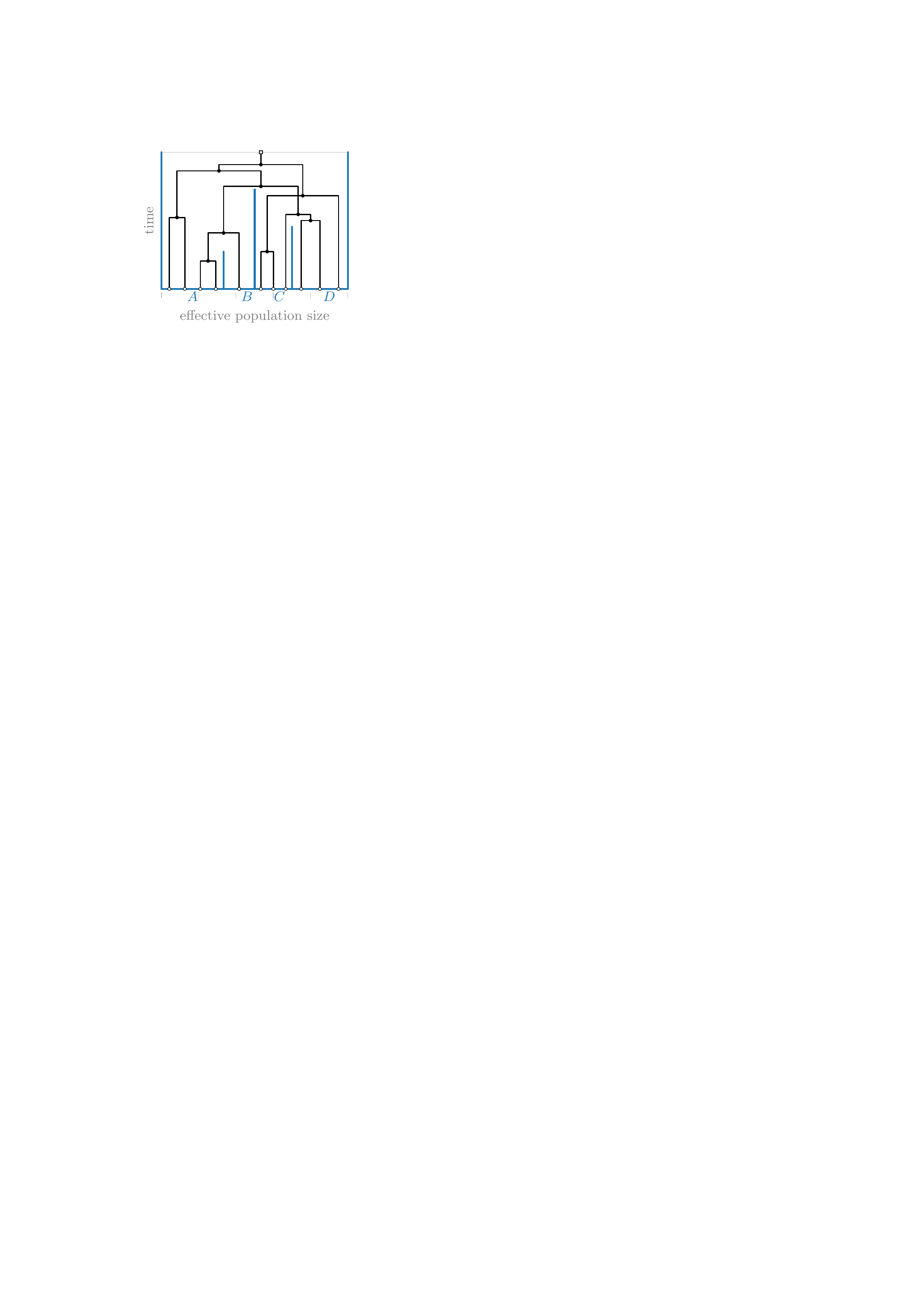}
		\caption{Continuous linear model.}
		\label{fig:mscExamples:continuous}
	\end{subfigure}
	\hfill
	\begin{subfigure}[t]{0.45 \linewidth}
		\centering
 		\includegraphics[page=3]{mscExamples}
		\caption{Piecewise constant model.}
		\label{fig:mscExamples:constant}
	\end{subfigure}
  \caption{Multispecies coalescent of a {\bl species tree} on four species $A, B, C, D$ 
  and a gene tree on eleven taxa under two different models.}
  \label{fig:mscExamples}
\end{figure}

For a phylogenetic tree $T$, let $V(T)$ be the vertex set of $T$,
let $E(T)$ be the edge set of $T$, and let $L(T)$ be the leaf set of $T$.
We define an \emph{MSC tree} as a triple~$\croc{S, T, \varphi}$ 
consisting of a species tree $S$, a gene tree $T$, 
and a mapping~$\varphi \colon L(T) \to L(S)$ with the following properties.
Both $S$ and $T$ are rooted binary phylogenetic trees
where all vertices have an associated height $h$
that are strictly decreasing from root to leaf.
We consider only the case where $h(\ell)$ is zero for each leaf $\ell$ in $L(S)$ and $L(T)$. 
For gene trees, we use the terms \emph{leaf}, \emph{vertex}, and \emph{edge};
whereas we use the terms \emph{species}, \emph{node}, and
\emph{branch} if we want to stress that we talk about species trees.
Each branch in $E(S)$ is associated with an upper and a lower population size.
The mapping~$\varphi$ describes which leaves of~$T$ belong to which species in $S$.
Next, consider two leaves~$\ell$ and~$\ell'$ of~$T$ with $\varphi(\ell) \neq \varphi(\ell')$.
Let $v$ be the lowest common ancestor of $\ell$ and $\ell'$.
In the MSC model we have that a divergence event of $v$ occurred before
the divergence event at a node $s$ of $S$ that ultimately split $\varphi(\ell)$ and $\varphi(\ell')$.
Hence, $h(v) > h(s)$ and we can extrapolate~$\varphi$ 
to a mapping of each inner vertex $v$ of $T$ to a branch of~$S$.
Lastly, we assume that the input consists of a single gene tree;
otherwise we merge multiple given gene trees 
by connecting all their roots to a super root.

\paragraph{Visualizing MSC trees.}
Visualizations of an MSC tree usually show the species and gene tree together.
This allows the user to detect any discordance between them
such as whether they have different topologies and where incomplete lineage sorting occurs.
It is also interesting to see where these events occur with respect to the inferred population sizes.
Furthermore, these drawings are used to diagnose whether the parameters of the model are set up well.   
E.g., if all inner vertices of the gene tree occur directly above nodes of the species tree
or if all occur near the root of the species tree, parameters may have been chosen poorly. 

Wilson \etal~\cite{WWB03} suggested a {\em tree-in-tree} style 
for an MSC tree $\croc{S, T, \varphi}$ under continuous models similar to the one shown in \cref{fig:mscExamples:continuous}.
There, the species tree~$S$ is drawn in a space-filling fashion
such that the branch widths of $S$ reflect the associated population sizes
and the gene tree $T$ is then drawn into $S$ as a classic node-link diagram.
Without these constraints on the branch widths,
$T$ could be drawn as a classic rectangular cladogram as in \cref{fig:phyTreeExamples:cladogram}.
As noted above, the MSC model ensures that~$T$ can be drawn inside $S$ without edges
of~$T$ crossing edges of~$S$
since, for each edge~$uv$ of $T$, we have that, if $u$ and $v$ lie inside 
the branches~$e_u$ and~$e_v$ of~$S$, respectively,
then either $e_u$ precedes $e_v$ or $e_u = e_v$.

Douglas~\cite{Dou20} developed the tool \texttt{UglyTrees} 
that generates tree-in-tree drawings for MSC trees under the piecewise constant model;
\cref{fig:mscExamples:constant} resembles such~a drawing.
He points out that the results are in many cases visually unpleasing 
	(as reflected in his choice for the tool's name),
in particular if the difference in width between parent and child is large.
This is amplified in practice by the inverse relationship between
the number of gene tree vertices and population sizes, 
which results in clusters of vertices in the narrowest branches~\cite{Dou20}.

\paragraph{Related work.}
There exist several applications where multiple phylogenetic trees are displayed together.
In a \emph{tanglegram}, two phylogenetic trees on the same set of taxa
are drawn opposite each other and the corresponding leaves are
connected by line segments for easy comparison~\cite{FKP10,BBBNOSW12}.
The tool \texttt{DensiTree}~\cite{Bou10} allows the user to compare
many trees simultaneously by drawing them on top of each other.
A \emph{co-phylogenetic tree} consists of two rooted phylogenetic trees, 
namely, a \emph{host tree} $H$ and a \emph{parasite tree} $P$,
together with a mapping (\emph{reconciliation}) of the vertices of $P$ to vertices of $H$.
Other than in an MSC tree, the vertices of $P$ commonly do not have heights 
but are mapped to nodes of $H$,
the host branches do not have associated population sizes,
and the edges of $P$ can go from one subtree of~$H$ to another,
representing so-called {\em host switches}.
Several tools visualize the reconciliation of co-phylogenetic trees~\cite{MM05,SSSLA07,CFOLH10,CDSJJB16}.
Commonly, the branches of $H$ are drawn with thick lines 
such that $P$ can be embedded into~$H$; see~\cref{fig:phyTreeExamples:reconciliation}.
Recently, Calamoneri \etal~\cite{CDMP20} suggested a tree-in-tree style for reconciliation
similar to the one for MSC trees above.
They draw~$H$ in a space-filling way and embed~$P$ into~$H$
as an orthogonal node-link diagram.

\begin{figure}[tb]
	\centering
	\includegraphics[page=2]{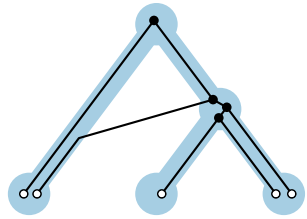}
	\caption{Representation of co-phylogenetic trees with the host tree as background shape
		and the parasite tree drawn with a node-link diagram (after Calamoneri \etal~\cite{CDMP20}).}
	\label{fig:phyTreeExamples:reconciliation}
\end{figure}

More generally, visualizations have been studied for various models in
phylogenetics, such as rooted phylogenetic trees~\cite{BBS05,Rus13,BGJO19},
in conjunction with a geographic map~\cite{Pag15,PMZ+13},
unrooted phylogenetic trees, and split networks~\cite{DH04,KH08,SNM12}.
In recent years, research has been extended from drawings of trees to
drawings of \emph{phylogenetic networks}~\cite{Hus09,HRS10,TK20,KS20,KM22},
which are more general.

All these applications share the main combinatorial objective
of finding drawings where the number of crossings between edges is minimized.
To this end, good embeddings of the trees (or networks) have to be found,
which are mostly fully defined by the order of the leaves. 
For example, Calamoneri \etal~\cite{CDMP20} investigated the problem 
of minimizing the number of crossings of the parasite tree in their drawings. 
They showed that this problem is in general NP-hard, 
though planar instances can be identified efficiently, and they
suggested two heuristics.

\paragraph{Contribution.}
Motivated by the drawing styles of Wilson \etal~\cite{WWB03}
and by the recently proposed space-filling drawing style
for reconciliation~\cite{CDMP20}
and phylogenetic networks~\cite{TK20},
we formally define tree-in-tree drawing styles for MSC trees (\cref{sec:style}).
In our base, rectangular style, we draw the species tree such that it completely fills a rectangle;
the branch widths are based on the number of leaves in the respective gene subtree. 
Additionally, population sizes can be represented, e.g., by a background color gradient.
This avoids visual overload and can be used for any population size model.
Nonetheless, based on this, we also define a style 
where the branch widths are proportional to the associated population sizes.

We then study the problem of minimizing the number of crossings between edges of the gene tree
both for the case when the embedding of the species tree is already {\em fixed}
and when it is left {\em variable}.
We show that the crossing minimization problem is NP-hard in both cases
(\cref{sec:np}). 
On the positive side, we show that crossing-free instances can be
identified in linear time (\cref{sec:planar}) and
we introduce two heuristics and an integer linear program (ILP) formulation
for the non-planar cases (\cref{sec:algo}).
We measure the performance of the heuristics on real-world instances
by comparing them to optimal solutions obtained via the ILP,
which we have tuned to solve medium-size instances in reasonable time.

Complete proofs to some of our claims and detailed descriptions 
of our algorithms can be found in \cref{app:appendix}.
Implementations of our algorithm are shared upon request.  

\section{Drawing Style}
\label{sec:style}
In this section, we define styles for tree-in-tree drawings 
of an MSC tree $\croc{S, T, \varphi}$.
A drawing is defined for particular leaf orders $\pi(S)$ 
and $\pi(T)$ of $S$ and $T$, respectively,
and we assume that they satisfy the following requirements.
(i) At least one leaf is mapped to each species.
(ii) The leaf order $\pi(T)$ is consistent with $\varphi$ and $\pi(S)$,
that is, the sets of leaves of $T$ mapped by $\varphi$ to a species $s$
are consecutive in $\pi(T)$ and 
succeed all leaves mapped to the species that precede $s$ in $\pi(S)$.
(iii) If all the leaves of a subtree $T'$ of $T$ are mapped to the
same species~$s$, then these leaves must be consecutive in~$\pi(T)$,
and $T'$ must admit a plane drawing above the leaves.
We first describe the \emph{rectangular style} where branch widths are
proportional to the number of leaves in subtrees,
and then the \emph{proportional style} where branch widths are
proportional to the population sizes.  Finally, we define the crossing
minimization problem for tree-in-tree drawings.

\paragraph{Rectangular style.}
Our drawing area is an axis-aligned rectangle $R$. 
The width of $R$ is twice the number of leaves of $T$.
We assume that the roots of~$S$ and~$T$ have out-degree~1.
We scale $h$ such that the heights of the roots equal the height of~$R$.
The given heights of vertices and nodes thus
correspond to heights in~$R$.

The species tree $S$ is drawn as follows; see~\cref{fig:style}.
For a species $s \in L(S)$, we define $n(s) = \abs{ \varphi^{-1}(s)}$,
that is, the number of leaves of $T$ mapped to $s$ by~$\varphi$.
The branches of $S$ are represented by internally disjoint rectangles whose union covers~$R$.
Of each such rectangle we only draw the left and the right border -- the \emph{delimiters}.
Their y-coordinates are defined by the heights of their start and target nodes.
The x-coordinates are defined recursively:
A branch incident to a species $s$ has width $2n(s)$
and an internal branch has width equal to the width of its two child branches; see~\cref{fig:style:base}.
Note that the branch incident to the root has a width equal to the width of $R$.

\begin{figure}[tbh]
  \centering
    \begin{subfigure}[t]{0.3 \linewidth}
		\centering
		\includegraphics[page=1]{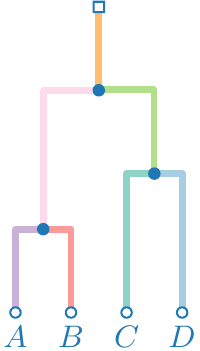}
		\caption{Species tree $S$.}
		\label{fig:style:s}
	\end{subfigure}
	$\qquad\qquad$
	\begin{subfigure}[t]{0.5 \linewidth}
		\centering
		\includegraphics[page=2]{mscStyle}
		\caption{MSC tree $\croc{S, T, \varphi}$.}
		\label{fig:style:base}
	\end{subfigure}
  \caption{In the rectangular drawing style for an MSC tree, the
    branch widths are proportional to the number of leaves in the
    contained subtree.}
  \label{fig:style}
\end{figure}

The gene tree $T$ is drawn in a classical orthogonal cladogram style:
The leaves of $T$ are evenly distributed at the base of $R$
by placing them on odd coordinates and ordered by $\pi(T)$.
Since $\pi(T)$ is consistent with $\varphi$ and $\pi(S)$,
for each species~$s$, the leaves $\varphi^{-1}(s)$ are thus placed at the baseline of the branch incident to~$s$.
Each inner vertex of $T$ is centered horizontally between its two
children and placed at its respective height.

Note that two or more vertical line segments can end up with the same
x-coordinate.  Suppose that this is the case for two vertical line
segments~$e_u$ and~$e_v$ that also overlap vertically
and that have end vertices $u$ and $v$, respectively;
see~\cref{fig:style:base} for an example.
Further suppose that~$e_u$ ends below~$e_v$.
We then shift $u$ slightly in the direction of its parent
and $v$ into the opposite direction.
Overlaps of horizontal line segments could be handled analogously,
though one would have to point out that the given heights are then misrepresented.

In this style, the population sizes are not represented by the branch
widths.  Instead, one could set the background color of each branch to
a corresponding intensity.  We advocate the rectangular tree-in-tree
style (with or without coloring) since it yields a clear
representation for MSC trees.  This is helpful for model diagnosis and
for finding incomplete lineage sorting events.

\paragraph{Proportional style.}
The proportional style conceptually follows the rectangular style,
though here the population sizes of each branch are represented by its width in the drawing;
see \cref{fig:mscExamples:continuous} for an example.
We require that the shape of $S$ is symmetric with respect to the central vertical axis.
Therefore, each branch~$e$ is represented by a sequence of trapezoids
whose widths are derived from the population sizes associated with~$e$.
Embedding $T$ into $S$ may force the non-horizontal line segments 
of~$T$ to take various different slopes, ``following'' the trapezoids.
The combinatorial properties of a drawing in the rectangular style and
a drawing in the proportional style may thus differ.

A proportional-style drawing can be computed as follows.
First, we draw $S$ bottom to top by adding one row of trapezoids
for each inner node $u$ of $S$ encountered. 
More precisely, at the height of $u$,
we calculate the width of each ``active'' branch and the total width of $S$.
We can then extend the delimiters between the branches. 
Second, the width of each species $s$ is subdivided
into~$2|\varphi^{-1}(s)|$ pieces at the baseline, such that each gene leaf
can be placed at an odd position and according to $\pi(T)$.
The rest of the gene tree can then be computed bottom-up. 
For each inner vertex $v$ of $T$, 
a sequence of line segments is drawn 
from each of its two children up to the height of $v$.
The two ends are connected with a horizontal line segment on which $v$ is placed centrally.
The slope of a non-horizontal line segment~$f$ is set 
such that~$f$ splits the top edge and the bottom edge
of the trapezoid containing~$f$ in the same ratio.

\paragraph{Crossing minimization problems.}
In the drawing styles above, by our assumptions for $S$, $T$, and $\varphi$,
no edge of~$T$ crosses a segment that represents~$S$. 
However, edges of~$T$ may cross each other.
Such crossings are determined by $\pi(S)$ and $\pi(T)$.  We do not
know~$\pi(T)$, and we consider two subproblems: in one $\pi(S)$ is
given, in the other $\pi(S)$ is not given.  Our objective is to find a
leaf order of~$\pi(T)$ and possibly of~$\pi(S)$ such that the number
of crossings among edges of~$T$ is minimized.

We define this problem formally for both drawing styles. 
In the \textsc{Variable Tree-in-Tree Drawing Crossing Minimization} (\VTT) problem, 
we are given an MSC tree $\croc{S, T, \varphi}$ and an integer $k$,
and the task is to find a tree-in-tree drawing (in rectangular or
proportional style) such that $T$ has at most $k$ crossings; 
a solution is specified by leaf orders $\pi(S)$ and $\pi(T)$.
In the \textsc{Fixed Tree-in-Tree Drawing Crossing Minimization}
(\FTT) problem, we have the same task, but we are additionally given a
leaf order $\pi(S)$; a solution is specified by a leaf order~$\pi(T)$.

\section{NP-Hardness}
\label{sec:np}
In this section, we show that \VTT and \FTT are NP-complete.  For
showing hardness, we reduce from \textsc{Max-Cut}, which is
NP-hard~\cite{GJ79}.  Recall that in an instance of \textsc{Max-Cut},
we are given a graph $G$ and a positive integer $c$.  The task is to
decide whether there exists a bipartition $\set{A, B}$ of the vertex
set $V(G)$ of~$G$ such that at least $c$ edges have one endpoint in
$A$ and one endpoint in $B$.

In the proofs below, we use the rectangular style.
Since the branch widths of the rectangles can also be seen as
population sizes, the proofs also hold for the proportional style.
We make use of the following construction 
where we replace a single leaf with a particular subtree.
Let $\ell$ be a leaf of $T$ with its parent $p$ at height $h(p)$.
Suppose that we replace $\ell$ with a full binary subtree~$T_\ell$ 
that has a specific number of leaves, say $n_\ell$ many.  (Recall that
a binary tree is \emph{full} if every vertex has either~0 or~2 children.)
Now we have two options to influence the shape of~$T_\ell$ in the
solution drawing.  In option~1, we set the height of the lowest
inner vertex of~$T_\ell$ to at least $h(p) - \varepsilon$
for some appropriately small $\varepsilon > 0$.
Now if the vertical line segment incident to $\ell$ is initially
crossed by at least one horizontal segment, then any drawing
of~$T_\ell$ will contain at least $n_\ell$ many crossings.
In this case, we call $T_\ell$ a \emph{thick expanded leaf}.
In option~2, we set the height of the root of~$T_\ell$
to~$\varepsilon'$, for some appropriately small $\varepsilon'>0$.
Then a drawing of $T_\ell$ will require $n_\ell$ horizontal space.
In this case, we call $T_\ell$ a \emph{wide expanded leaf}.

\begin{theorem} \label{clm:vtt:np}
The {\em\VTT} problem is NP-complete.
\end{theorem}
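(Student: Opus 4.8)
The plan is to prove NP-completeness of \VTT by (a) showing membership in NP and (b) reducing from \textsc{Max-Cut}. Membership is routine: given the leaf orders $\pi(S)$ and $\pi(T)$, the resulting drawing is determined, and we can count crossings among edges of $T$ in polynomial time and check whether the count is at most $k$. So the technical heart is the hardness reduction.

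The plan is to prove the two halves of NP-completeness separately, with membership being the easy part. A candidate solution is just the pair of leaf orders $\pi(S)$ and $\pi(T)$, both of size polynomial in $\abs{V(T)}$. Given them, the tree-in-tree drawing of $\croc{S,T,\varphi}$ is determined (the local tie-breaking shifts described above are applied deterministically), so the number of crossings among edges of $T$ is a well-defined, polynomial-time computable function of $\pi(S)$ and $\pi(T)$, and we can test whether it is at most~$k$. Hence \VTT is in NP, and the remainder of the work is the hardness reduction.

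For hardness I would reduce from \textsc{Max-Cut}. Given $G$ with vertices $v_1,\dots,v_n$, edges $e_1,\dots,e_m$, and a target $c$, I would build an MSC tree $\croc{S,T,\varphi}$ and a bound $k$ in which the choice ``$v_i\in A$ versus $v_i\in B$'' is encoded by a single independent binary choice in the leaf order, namely the left--right flip of the pair of species associated with $v_i$ in $\pi(S)$. For each edge $e_j=v_iv_{i'}$ I would install an \emph{edge gadget}: two gene leaves, one in the species block of $v_i$ and one in that of $v_{i'}$, that are forced to join at a common high ancestor in $T$. The gadget is wired so that the connecting edges admit a crossing-free routing exactly when $v_i$ and $v_{i'}$ receive opposite flips, i.e.\ when $e_j$ is a cut edge, and are forced to incur one crossing when the flips agree. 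These gadgets are built from the expanded-leaf construction above: wide expanded leaves shape the geometry and force the intended separations, while the single triggering leaf of each monochromatic edge is replaced by a thick expanded leaf with $n_\ell$ leaves, turning its one crossing into $n_\ell$ crossings. Choosing $n_\ell$ larger than every crossing the rest of the drawing can contribute ensures the optimizer never trades away a cut edge, and setting $k:=(m-c)\,n_\ell+B$, with $B$ the fixed number of unavoidable structural crossings, makes ``$G$ has a cut of size at least $c$'' the target condition.

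I would then prove the equivalence in both directions. From a cut $\set{A,B}$ of size at least $c$ I would read off the per-vertex flips, fix any compatible $\pi(S)$ and $\pi(T)$, and check that every cut-edge gadget is drawn in its crossing-free configuration, so the total is at most $(m-c)\,n_\ell+B=k$. Conversely, from any drawing with at most $k$ crossings, the amplification forces all but at most $m-c$ edge gadgets into their cut configuration; the flips read off from $\pi(S)$ then define a bipartition whose cut contains at least $c$ edges.

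The \textbf{main obstacle} is the design and verification of the gadgets so that the crossing count decomposes as cleanly as the bound $k$ demands. Concretely, I expect the difficulty in three points: making each vertex's flip a genuinely independent two-valued choice that is compatible with the consistency requirements linking $\pi(S)$, $\pi(T)$, and $\varphi$, so that no extra freedom in the leaf orders yields unintended configurations; ensuring that distinct edge gadgets do not interact, so that the number of crossings is exactly $\sum_j \mathrm{cost}(e_j)+B$ with no spurious cross-gadget crossings that could be played off against the monochromatic penalties; and keeping the amplification factor $n_\ell$ polynomially bounded so that the whole instance stays polynomial in size. Finally, since the branch widths of $S$ may equally be read as population sizes, the same construction yields hardness in the proportional style as well, as already observed above.
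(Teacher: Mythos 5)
Your high-level strategy matches the paper's exactly -- membership is routine, and hardness goes by reduction from \textsc{Max-Cut} with a per-vertex binary flip encoded in $\pi(S)$, edge gadgets whose ``monochromatic'' cost is amplified by expanded leaves, and a threshold of the form $k=(m-c)\,n_\ell+B$. But there is a genuine gap, and it sits precisely where you flag ``the main obstacle'': the gadgets are never constructed, and the one idea that makes the reduction work is missing, namely a mechanism by which two flips made at far-apart places in $S$ become geometrically comparable. A lone cherry with one leaf in the block of $v_i$ and one in the block of $v_j$ cannot be ``forced to incur one crossing when the flips agree'': crossings arise only between a horizontal and a vertical segment of $T$, and flipping each leaf inside its own species block changes nothing for that cherry in isolation; there must be some fixed structure \emph{between} the two blocks against which the flip positions register. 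The paper supplies this with three interlocking devices: (i) $S$ is a caterpillar on species $0,1,1',\dots,n,n'$, and a vertex gadget (two cherries through a reference species $0$, one leaf replaced by a thick expanded leaf on $n^8$ leaves) forces $i$ and $i'$ onto \emph{opposite} sides of $0$, so that ``$v_i\in A$'' reads as ``$i$ left of $0$''; (ii) the edge gadget for $v_iv_j$ is a cherry from $i$ to $j'$ -- not to $j$ -- so that its horizontal segment spans over species $0$ exactly when $i$ and $j$ lie on the \emph{same} side of $0$; (iii) a single \emph{shared} thick expanded leaf on $n^5$ leaves sits in species $0$ with its lowest inner vertex above all edge-gadget roots, converting that spanning event into $n^5$ crossings. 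Your sketch contains none of these, and without them the claimed cost behaviour of the edge gadget is an unverified specification rather than a construction.

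A second, smaller defect: you describe $B$ as ``the fixed number of unavoidable structural crossings,'' but the incidental crossings among gadgets are not fixed -- they vary with the chosen leaf orders. The paper instead \emph{upper-bounds} them (at most $n^2$ per pair of vertex gadgets, $n^4$ per pair of edge gadgets, $2n^3$ per mixed pair, in total at most $2n^4$ for $n\ge 3$) and sets $k=(m-c)\,n^5+2n^4$; the forward direction of the equivalence uses the upper bound, and the backward direction uses $n^5>2n^4$ so that no trade-off can hide a monochromatic edge. Your remark that $n_\ell$ should exceed ``every crossing the rest of the drawing can contribute'' shows you have the right inequality in mind, but the formula with an exact fixed $B$ as stated would need this repair, and the two-sided verification (cut of size $\ge c$ yields $\le k$ crossings; $\le k$ crossings yields a cut of size $\ge c$) must be carried out against the bound, not against an exact constant.
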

\begin{proof}
The problem is in NP since, given an MCS tree $\croc{S, T, \varphi}$,
an integer~$k$, and leaf orders $\pi(S)$ and $\pi(T)$, we can check in
polynomial time whether the resulting drawing has at most $k$ crossings.
To prove NP-hardness, we reduce from \textsc{Max-Cut} as follows.

For a \textsc{Max-Cut} instance $(G,c)$, we construct an instance
$(\croc{S, T, \varphi}, k)$ of the \VTT problem with a species tree
$S$, a gene tree $T$, a leaf mapping $\varphi$, and a positive
integer~$k$; see \cref{fig:np:v:overview}.
Let $V(G) = \set{v_1, \ldots, v_n}$ ($n \geq 3$), let $m = \abs{E(G)}$, 
and let $\set{A, B}$ be some partition of~$V(G)$. 
Let $S$ be a caterpillar tree on $2n + 1$ species 
labeled $0, 1, 1', \ldots, n, n'$ with decreasing depth, that is,
$S$ contains phylogenetic subtrees on species sets $\set{0, 1}$, $\set{0, 1, 1'}$, \ldots, $\set{0, 1, 1', \ldots, n, n'}$.
For each $i \in \set{1, \ldots, n}$, 
we add to~$T$ a \emph{vertex gadget} (described below) to enforce that
species~$i$ and~$i'$ are on opposite sites of~0.
Then species~$i$ being to the left of~0 corresponds to $v_i$ being in~$A$, 
whereas $i$ being to the right of~0 corresponds to $v_i$ being in~$B$.
Furthermore, for each edge $\set{v_i, v_j} \in E(G)$ with $i < j$,
we add to~$T$ an \emph{edge gadget} that consists of a \emph{cherry}
(i.e., a subtree on two leaves) from $i$ to $j'$ and that induces~$n^5$
crossings if and only if $i$ and $j$ are both to the left or both to
the right of~$0$.
By construction, all pairs of vertex gadgets will induce at most $n^2$ crossings,
all pairs of edge gadgets will induce at most~$n^4$ crossings, 
and all pairs of vertex and edge gadgets will induce at most~$2n^3$ crossings.
In total, these gadgets induce at most $2n^4$ crossings (using $n\ge3$).
Hence, by setting $k = (m - c)n^5 + 2n^4$, we get that a tree-in-tree drawing of $\croc{S, T, \varphi}$ with less than $k$ crossings exists 
if and only if $G$ admits a cut containing at least $c$ edges.

\begin{figure}[tb]
  \begin{subfigure}[b]{.22\textwidth}
    \centering
    \includegraphics[page=1]{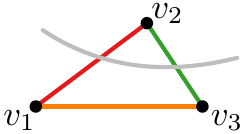}
    \caption{Input graph with a cut (gray).}
    \label{fig:np:maxCut}
  \end{subfigure}
  \hfill
  \begin{subfigure}[b]{.765\textwidth}
    \hspace*{-9.2ex}
    \includegraphics[page=2]{NPmaxCutVariable}
    \caption{Resulting rectangular tree-in-tree drawing.}
  \end{subfigure}
  \caption{Example for the reduction of a given graph to a rectangular
    tree-in-tree drawing with variable species tree embedding.  Each
    edge gadget is drawn in the respective color.}
  \label{fig:np:v:overview} 
\end{figure}

A \emph{vertex gadget} consists of two cherries; see \cref{fig:np:v:vertexGadget}.
The first cherry has one leaf each in species 0 and $i'$ and their parent $p$ gets some height $h(p)$.
The second cherry has one leaf each in species 0 and $i$ and their parent gets height $h(p) + 1$.
We replace the leaf in $i$ with a thick expanded leaf on $n^8$
many leaves.  Note that if $i$ and $i'$ are on the same side of~0,
then $i$ lies between $i'$ and~0.
Hence, in this case, the horizontal line segment through $p$ 
crosses the thick expanded leaf and causes $n^8$ crossings. 
Since $n^8 > k$, the vertex gadgets work as intended.

\begin{figure}[t]
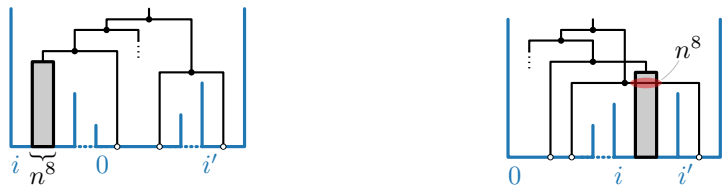

	\centering
	\begin{subfigure}[t]{.47 \linewidth}
		\centering
		\includegraphics[page=3]{NPmaxCutVariable}
		\caption{If $i$ and $i'$ are on different sides of $0$, the gadget induces no crossings.}
		\label{fig:np:v:vertexGadget:good}
	\end{subfigure}
	\hfill
	\begin{subfigure}[t]{.47 \linewidth}
 		\centering
		\includegraphics[page=4]{NPmaxCutVariable}
		\caption{If $i$ and $i'$ are on the same side of $0$, the gadget induces $n^8$ many crossings.}
		\label{fig:np:v:vertexGadget:bad} 
	\end{subfigure}
  \caption{The vertex gadget for $v_i$ forces the species $i$ and $i'$ on opposite sites of species~$0$.}
  \label{fig:np:v:vertexGadget} 
\end{figure}

We set the heights of the roots of the edge gadget cherries above those of all vertex gadgets.
Furthermore, we add a thick expanded leaf on $n^5$ leaves in species 0 with the lowest inner vertex higher than any edge gadget. 
Hence, the horizontal line segment of an edge gadget crosses $n^5$ vertical segments if and only if $i$ and $j$ are both in $A$ or both in $B$. 

To tie everything together in $T$, we introduce a path from the thick
expanded leaf in~0 to the root.  To this path, going upwards, we first
connect the cherries of the vertex gadgets, whose leaves are in
$1, 1', \dots, n, n'$, in this order.  Above those, we then connect
the cherries of the edge gadgets to the path.
\end{proof}

A full proof of the following statement can be found in \cref{app:np}.

\begin{restatable}{theorem}{fttNP}\label{clm:ftt:np}
  The {\em\FTT} problem is NP-complete.
\end{restatable}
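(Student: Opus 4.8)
The plan is to reduce from \textsc{Max-Cut} again, reusing the overall architecture of the \VTT proof, but now adapting it to the setting where the species-tree leaf order~$\pi(S)$ is part of the input and therefore \emph{fixed}. Membership in NP is immediate by the same argument as before: given a candidate~$\pi(T)$, we can compute the resulting drawing and count crossings in polynomial time. The substance is NP-hardness, and the main conceptual obstacle is that in the \VTT reduction the vertex gadgets exploit the \emph{freedom} to place species~$i$ and~$i'$ on either side of species~$0$; with~$\pi(S)$ fixed, the sides of all species are already determined, so that mechanism no longer encodes the bipartition. I must instead build the binary choice into the gene tree's own leaf order.

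First I would fix a convenient species tree~$S$ together with a fixed leaf order~$\pi(S)$, and arrange that each graph vertex~$v_i$ corresponds to \emph{one} species~$s_i$ to which I map \emph{two} designated gene leaves, say~$a_i$ and~$b_i$. Because multiple gene leaves share a species, requirement~(ii) on~$\pi(T)$ only forces the block~$\{a_i,b_i\}$ to be consecutive, but leaves the internal order within the block free. The intended correspondence is: $a_i$ before~$b_i$ encodes $v_i\in A$, and $b_i$ before~$a_i$ encodes $v_i\in B$. This gives each vertex a genuine binary choice that survives the fixing of~$\pi(S)$. I would then design an \emph{edge gadget} for each $\{v_i,v_j\}\in E(G)$: a small subtree (a cherry, one leaf mapped into $s_i$'s block and one into $s_j$'s block, whose roots are placed high, above all other gene structure) whose induced crossing count depends on whether the two blocks are ordered ``consistently'' (both-$A$/both-$B$) or ``oppositely.'' As in the \VTT proof, I would calibrate it so that a monochromatic edge contributes a large penalty of~$n^5$ crossings, realized with a thick expanded leaf (option~1 of the leaf-expansion construction) placed to guarantee the large crossing count fires exactly when the orders agree.

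The delicate part is ensuring the intended semantics is enforced and that stray crossings from interactions between gadgets are negligible. Concretely, within each species block I would need some gadget to make both orderings of~$\{a_i,b_i\}$ attainable while preventing degenerate solutions that dodge the encoding, and I would place all edge-gadget roots strictly above all vertex-level structure so that the ``which side'' information is read off cleanly at a common height (mirroring the use of the thick expanded leaf in species~$0$ in the \VTT proof). I would then bound the total ``noise'': all pairwise interactions among the low-order gadgets contribute at most some polynomial $o(n^5)$ in aggregate (on the order of~$2n^4$, as in \VTT), so that setting $k=(m-c)n^5+2n^4$ makes a drawing with at most~$k$ crossings exist if and only if~$G$ has a cut of size at least~$c$. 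The main obstacle, and where I expect the real work to lie, is the combinatorial bookkeeping that certifies exactly this calibration: verifying that each monochromatic edge forces~$n^5$ crossings while each bichromatic edge forces~$o(n^5)$, and that no clever choice of~$\pi(T)$ within the allowed consecutivity constraints can both avoid the~$n^5$ penalties and violate the intended bipartition. Because the structure so closely parallels \cref{clm:vtt:np}, I would expect the formal argument to consist largely of adapting those gadgets and recounting the crossing contributions, which is why the full proof is deferred to \cref{app:np}.
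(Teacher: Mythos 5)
Your NP-membership argument and calibration template are fine, and your central instinct---that once $\pi(S)$ is fixed the binary choice must migrate into the arrangement of gene leaves \emph{within} a species---is exactly the direction the paper takes. Its vertex gadget places a ``partitioner'' (a thick expanded leaf of width $n^7$ together with a wide expanded leaf of width $n^{10}$) among the edge-gadget leaves in a species $s_i^0$, and two cherries through a vertex $p_i$, kept taut by spacer gadgets in flanking species, force the partitioner to sit either leftmost or rightmost in $s_i^0$; that position encodes $v_i\in A$ or $v_i\in B$. So the skeleton of your plan matches the paper's.

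The genuine gap is the readout mechanism. First, two bare leaves $a_i,b_i$ carry no crossing signal at all: no choice of $\pi(T)$ elsewhere can ``see'' their internal order unless substantial structure hangs off them, which is why the paper attaches expanded leaves of polynomial width. Second, and more fundamentally, the device you name---a single thick expanded leaf that ``fires exactly when the orders agree''---cannot test agreement. A thick expanded leaf penalizes one event: a horizontal segment sweeping across one fixed vertical line. But each endpoint of your edge-gadget cherry moves with only one of the two blocks, so thresholding an endpoint against any fixed structure computes a function of a \emph{single} vertex's bit, never the equality of two bits. The only coordinate of the gadget that depends on both choices is the x-coordinate of the cherry's root (a midpoint), which takes three positions: far left (both in $A$), far right (both in $B$), and centre (split). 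Penalizing exactly the two extremes therefore requires a detector with a gap in the middle, and that is precisely what the paper builds: a \emph{cut gadget} of two towers, each containing $n^4$ cherries whose roots lie in a narrow height band $[y^\star, y^\star+\varepsilon]$ just above all edge-gadget roots $p_{ij}$, so that the vertical segment above $p_{ij}$ crosses $n^4$ horizontal segments if and only if it lands inside a tower rather than in the central gap. Making the midpoints land correctly additionally needs the wide $n^{10}$ expanded leaf (so the shift between the two partitioner positions is substantial), a third leaf per edge gadget far to the right, and spacer gadgets calibrating every $p_{ij}$ to the one shared cut gadget. Your hedge about ``some gadget to prevent degenerate solutions'' points at the right spot, but the single-thick-leaf detector as stated would fail, and the missing midpoint/two-tower idea is the heart of the paper's proof; the constants also differ from your \VTT-derived ones ($n^4$ penalty per uncut edge and $n^3$ noise, with the fixed crossings of spacers and partitioner cherries folded into $k$), though that part is routine.
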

\begin{proof}[sketch]
To prove NP-hardness, we again reduce from \textsc{Max-Cut}.
For a \textsc{Max-Cut} instance $(G,c)$, we
construct an instance $(\croc{S, T, \varphi}, \pi(S), k)$ of \FTT. 
Let $V(G) = \set{v_1, \ldots, v_n}$, and
let $\set{A, B}$ be some partition of~$V(G)$.
Our construction consists of three parts and uses several different gadgets; see \cref{fig:np:f:overview}.
On the left side, we have a \emph{vertex gadget} for each vertex $v_i$.
For each edge $v_iv_j$, we have an \emph{edge gadget} that
connects the vertex gadgets of~$v_i$ and~$v_j$. The gadget has a further leaf at the far right.
We simulate $v_i$ being in either partition by having a thick expanded leaf always being either
left or right of all attached edge gadgets; otherwise it would cause too many crossings. 
Using \emph{spacer gadgets}, the leaves of edge gadgets to the far right are horizontally placed such that the root of each edge gadget 
lies exactly where we place a \emph{cut gadget}.
The cut gadget will induce $n^4$ crossings with the incoming edge of the root of each edge gadget 
only if the respective vertices are in the same partition.
While some parts of our construction induce a fixed number of crossings,
others cause in total at most $n^3$ crossings.
Hence, as in the proof of \cref{clm:vtt:np},
we can set~$k$ with respect to $c$
such that the instance admits a tree-in-tree drawing with at most $k$ crossings
if and only if $G$ admits a cut with at least $c$ edges.
\end{proof}

\begin{figure}[t]
  \centering
  \includegraphics[page=2]{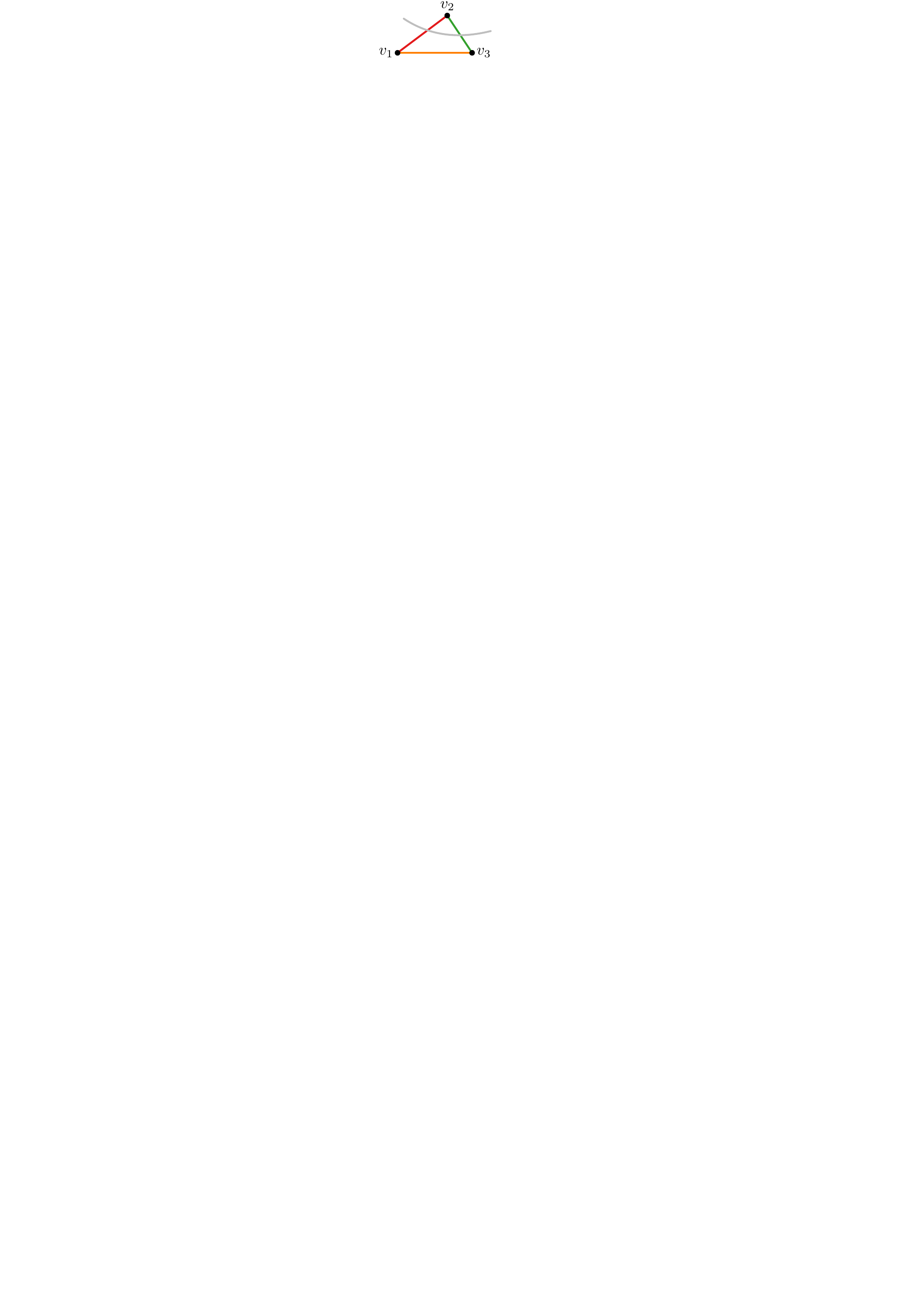}
  \caption{Sketch of the reduction of the graph from
    \cref{fig:np:maxCut} to a rectangular tree-in-tree drawing with
    fixed species tree embedding.  Each edge gadget is drawn in the
    color of the respective edge in \cref{fig:np:maxCut}.  The gadget
    for the edge~$v_1v_3$ (orange) has $n^4$ crossings more than the
    other edge gadgets; namely with the cut gadget (purple).}
  \label{fig:np:f:overview} 
\end{figure}

\section{Planar Instances}
\label{sec:planar}
In this section, we show that we can decide in linear time
whether an \FTT or \VTT instance admits a planar drawing.

\begin{theorem} \label{clm:planar:vtt}
Both when the embedding of $S$ is fixed or variable, we can decide,
in linear time, whether an MSC tree $\croc{S, T, \varphi}$ admits a
planar rectangular tree-in-tree drawing.  
If yes, such a drawing can be constructed within the same time bound.
\end{theorem}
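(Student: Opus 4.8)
The plan is to reduce the geometric planarity question to a combinatorial one about leaf orders, and then to settle the combinatorial question by a single bottom-up pass over the trees. Concretely, I would first argue that a rectangular tree-in-tree drawing of $\croc{S,T,\varphi}$ is crossing-free if and only if $\pi(T)$ is a \emph{plane order} of $T$, meaning that the leaves of every subtree of $T$ form a contiguous block of positions. For the easy direction, in a plane order sibling subtrees occupy disjoint vertical strips, every horizontal segment of a vertex lies strictly above the subtrees hanging from it, and the only vertical segments sharing an $x$-coordinate come from an ancestor--descendant pair, which the prescribed slight shift separates without introducing a crossing; hence the drawing is planar independently of the chosen heights. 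Conversely, if the leaves of two incomparable subtrees interleave, the horizontal segment spanning one of them must cross a vertical segment rising from the other, again regardless of heights. So both \FTT and \VTT planarity amount to asking whether $T$ admits a plane order whose per-species leaf blocks are consistent with $\varphi$ and with the given $\pi(S)$ (fixed), respectively with $\varphi$ and \emph{some} plane order $\pi(S)$ of $S$ (variable); requirement~(iii) on subtrees mapped to a single species is automatically implied by such an order.

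\emph{Fixed embedding.} Here $\pi(S)=(s_1,\dots,s_k)$ is given, so species $s_i$ must fill a block of $n(s_i)$ consecutive positions and the blocks appear in this order. I would traverse $T$ bottom-up, keeping for each subtree $T_v$ a constant-size signature: the index range $[\alpha_v,\beta_v]$ of species its leaves reach and the number of leaves of $T_v$ lying in the two boundary species $s_{\alpha_v}$ and $s_{\beta_v}$, together with the invariant that every species strictly inside the range is fully contained in $T_v$ (otherwise $T_v$ can never be made contiguous and I reject). At an internal vertex with children $c_1,c_2$ I try both orientations: placing $c_1$ left of $c_2$ is feasible exactly when their ranges are adjacent ($\beta_{c_1}=\alpha_{c_2}-1$) or meet in a single shared boundary species ($\beta_{c_1}=\alpha_{c_2}$), and every species that becomes interior at $v$ is fully owned, the shared species in particular being split consistently between the two children. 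Each merge costs $\Oh(1)$; $T$ is accepted iff the root's signature covers all species with full ownership, and the orientations chosen along the way read off $\pi(T)$, giving an $\Oh(\abs{V(T)})$ algorithm.

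\emph{Variable embedding.} Now $\pi(S)$ ranges over the plane orders of $S$ as well. I would rephrase the goal as finding one linear order of the gene leaves that simultaneously makes contiguous (i)~the leaves of each subtree of $T$, (ii)~the leaves of each species, and (iii)~the leaves mapping into each clade of $S$. Families (i) and (iii) are laminar and are linked through $\varphi$: by the MSC nesting every inner vertex $v$ of $T$ maps into a species branch, so the set $\mathrm{Sp}(v)$ of species reached by $T_v$ lies inside a single clade of $S$. I would therefore run the fixed-case recurrence while also orienting the nodes of $S$, i.e.\ a bottom-up pass over $S$ that at each node merges the arrangement information of its two child clades and at the same time fixes that node's orientation; the laminar (tree) structure of both constraint families keeps this a restricted, linear-time PQ-tree-style computation.

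I expect the main obstacle to be proving that these local merge rules are globally correct: that maintaining only boundary signatures (fixed case), and committing to orientations greedily while coordinating the two trees (variable case), never discards a feasible global order, so that no backtracking is required. The remaining work --- the case distinction between an adjacent and a shared boundary species together with the full-ownership bookkeeping, and a careful verification of the height-independence and the shifting claim in the reduction --- is routine but must be carried out with care.
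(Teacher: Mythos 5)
Your route is genuinely different from the paper's: the paper proves \cref{clm:planar:vtt} by a reduction to upward planarity testing of a single-source digraph --- it merges the leaves of $T$ in each species into a single vertex, attaches all of these to a common sink $t$ (and, in the fixed case, adds an extra vertex $u_i$ joined to the merged vertices of consecutive species $s_i,s_{i+1}$ and to $t$, which pins the species order), and then runs the linear-time algorithm of Bertolazzi \etal~\cite{BDMT98} on the resulting graph $\bar T$. That algorithm absorbs the entire combinatorial difficulty of choosing embeddings of both $T$ and $S$ at once. Your plan instead makes the combinatorics explicit: the characterization of planarity by subtree-consecutivity of $\pi(T)$ (and its height-independence) is correct and, for the \FTT case, your signature-merging pass is sound in outline --- it is essentially the paper's left-/right-stack heuristic, for which the paper itself asserts planarity-completeness, and the boundary-range/full-ownership bookkeeping you describe can be made rigorous with routine care.

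The genuine gap is the variable case, and you have named it yourself without closing it. Deciding whether one leaf order simultaneously makes consecutive (i) all clades of $T$, (ii) all species, and (iii) all clades of $S$ is a joint-consecutivity problem for two laminar families coupled only through $\varphi$, and these families do not align node-by-node: the species set $\mathrm{Sp}(v)$ reached by a gene subtree $T(v)$ lies inside some clade of $S$ but need not exhaust it, so a single vertex of $T$ can impose a constraint straddling several nodes of $S$ whose rotations are still open. This is precisely the situation in which greedy, backtracking-free orientation of $S$ can commit to a rotation that kills a feasible global order, and ``restricted, linear-time PQ-tree-style computation'' is an assertion where the proof has to live: you specify neither the merge rules at an $S$-node, nor the representation of the set of still-feasible orders, nor an exchange argument showing local choices are never regretted. (The paper's own remark that its \VTT heuristic finds planar solutions when they exist is likewise only asserted, not proved --- so you cannot borrow it.) To complete your proof you would need either a proven completeness lemma for the greedy rotation scheme, or an actual PQ-/PC-tree construction with a linear-time analysis; alternatively, the paper's reduction to single-source upward planarity dissolves the variable case entirely and is the shorter path.
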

\begin{proof}
Bertolazzi \etal~\cite{BDMT98} devised a constructive linear-time algorithm 
for upward planarity testing of a single-source (or single-sink) digraph,
that is, whether the given digraph can be drawn with each edge $uv$ 
drawn as a monotonic upward curve from $u$ to $v$.
For both the \VTT and \FTT problem, 
we can extend $T$ to a single-source digraph $\bar T$
that admits an upward planar embedding if and only if 
$\croc{S, T, \varphi}$ admits a planar tree-in-tree drawing 
(respecting any given leaf order for $S$).
We can thus apply Bertolazzi \etal's algorithm to $\bar T$.

First, suppose the embedding of $S$ is variable.
Let $L_1, L_2, \ldots, L_m$ be the subsets of $L(T)$ corresponding to the $m$ species of $S$.
For $i \in \set{1, \ldots, m}$, we merge all vertices in $L_i$ into a single vertex $v_i$.
We then connect the vertices $v_1, \ldots, v_m$ to a new vertex $t$; see \cref{fig:zeroCrossings:variable}.
We use the resulting single-source digraph as $\bar T$, 
which clearly has the desired properties.

\begin{figure}[b]
  \centering
    \begin{subfigure}[t]{0.28 \linewidth}
		\centering
		\includegraphics[page=1]{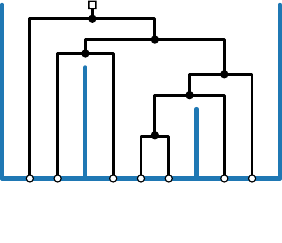}
		\caption{A planar tree-in-tree instance $\croc{S, T, \varphi}$.}
		\label{fig:zeroCrossings:input}
	\end{subfigure}
	\hfill
	\begin{subfigure}[t]{0.30 \linewidth}
		\centering
		\includegraphics[page=2]{zeroCrossings}
		\caption{Extending $T$ if the embedding of $S$ is variable.}
		\label{fig:zeroCrossings:variable}
	\end{subfigure}
	\hfill
	\begin{subfigure}[t]{0.35 \linewidth}
		\centering
		\includegraphics[page=3]{zeroCrossings}
		\caption{Extending $T$ further if the embedding of $S$ is fixed.}
		\label{fig:zeroCrossings:fixed}
	\end{subfigure}
  \caption{We can test efficiently whether a tree-in-tree instance~$\croc{S, T, \varphi}$ admits a planar drawing 
  with a single-source upward planarity test on an extended gene~tree.}
  \label{fig:zeroCrossings}
\end{figure}

Second, if the embedding of $S$ is fixed, 
we extend $\bar T$ from above further to ensure that the subsets $L_1, \ldots, L_m$ end up in correct order.
Let the species of~$S$ be $s_1, \ldots, s_m$ from left to right.
For $i \in \set{1, \ldots, m-1}$, we add a vertex $u_i$ and 
edges $v_iu_i$, $v_{i+1}u_i$, and $u_it$; see \cref{fig:zeroCrossings:fixed}.
The resulting graph is our new $\bar T$, which works again as intended.

In both cases, $\bar T$ has a linear size and can be constructed in linear time.
\end{proof}

\section{Algorithms} 
\label{sec:algo}
For non-planar instances of the \FTT and the \VTT problem,
we propose a heuristic as well as an ILP.
We describe the main ideas of the algorithms here;
more details can be found in \cref{app:heuristics}.
The ILP, which models a drawing in a straightforward fashion, is described in \cref{app:ilp}.
Overlaps of vertical segments in an ILP solution are resolved in a post-processing step. 
We focus here on the rectangular tree-in-tree style,
though the heuristics can also be set up analogously for the proportional style.
However, since the computation for the proportional style is more involved,
as alternative, one can simply use leaf orders computed for the rectangular style.

\paragraph{Heuristic for {\em\FTT}.}
Let $\croc{S, T, \varphi}$ be an MSC tree and $\pi(S)$ a leaf order for $S$.
The idea of the heuristic is to greedily sort the leaves in each species from the left and from the right towards the centre.
To this end, the algorithm (i) goes through the inner vertices in order of increasing height
and (ii) when the subtree~$T(v)$ of an inner vertex $v$ has leaves in more than one species,
then any unplaced leaves of $T(v)$ are put on a \emph{left stack} 
or a \emph{right stack} of their respective species; see \cref{fig:heuristic}. 
In doing so, we aim at a placement that minimizes the horizontal dimension of a drawing of $T(v)$.
In particular, $T(v)$ initially has unplaced leaves in at most two species.
Therefore, we place the leaves in the left species $s$ on the right stack of $s$
and the leaves in the right species $s'$ on the left stack of $s'$; see \cref{fig:heuristic:second}.
When leaves are pushed on a stack,
it is ensured that any subtree with all leaves in one species
admits a planar drawing.
This can be done in linear~time.

\begin{figure}[t]
  \centering
    \begin{subfigure}[t]{0.45 \linewidth}
		\centering
		\includegraphics[page=1]{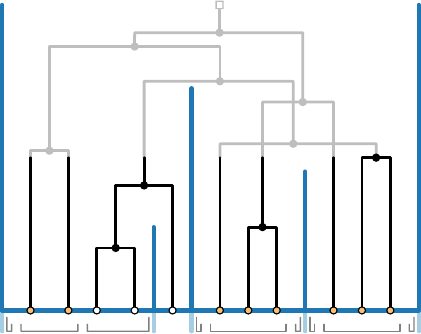}
		\caption{Configuration after four vertices.}
		\label{fig:heuristic:first}
	\end{subfigure}
	\hfill
	\begin{subfigure}[t]{0.45\linewidth}
		\centering
		\includegraphics[page=2]{heuristic}
		\caption{Configuration after six vertices.}
		\label{fig:heuristic:second}
	\end{subfigure}
  \caption{The heuristic sorts the leaves in each species from the sides towards the centre
  by using a left stack and a right stack for each species (plus a central bucket of unplaced (orange) leaves),
  here on the example from \cref{fig:mscExamples,fig:style}.}
  \label{fig:heuristic}
\end{figure}

\paragraph{Heuristic for {\em\VTT}.}
We extend the heuristic for \FTT to also compute a leaf order for $S$ as follows.
The main idea is to set the rotation of inner nodes of $S$ 
such that subtrees of $T$ horizontally span over few species.
Therefore, when we handle an inner vertex $v$ with children $x$ and $y$
and we try to move the roots of~$T(x)$ and $T(y)$ close together.
Suppose $x$ lies in the branch ending at node $x'$ of $S$.
Let $S(x')$ be the minimal phylogenetic subtree of $S$ on all species
that contain a leaf of $T(x)$; define $S(y')$ analogously.
If $S(x')$ and $S(y')$ are disjoint,
then we set the rotation of each unfixed vertex 
on the path from the root of $S(x)$ to the root of $S(y)$
such that the species of $S(x)$ and $S(y)$ get as close together as possible; see \cref{fig:heuristic:vtt:short}.
Only then is $v$ processed with the \FTT heuristic. 
There are a few other cases to consider, which can be handled along the same line
(see \cref{app:heuristics} for details).
Overall, handling an inner vertex of $T$ can be done in linear time
and so the overall running time is quadratic.

\begin{figure}[t]
  \centering
  \includegraphics[page=3]{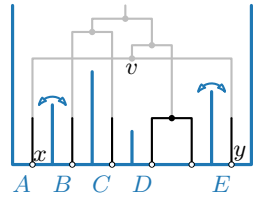}
  \caption{The heuristic for the \VTT problem rotates inner nodes of $S$
  to bring the leaves of the currently handled gene subtree closer together.
  Here, for the second inner vertex $v$ of $T$, two nodes would be rotated to bring the species $A$ and~$E$~together.}
  \label{fig:heuristic:vtt:short}
\end{figure}

Note if an instance admits a planar solution, then the heuristics find one.
That is, because any rotation of a node of $S$
or an assignment to stacks keeps the leaves of a subtree of $T$ consecutively whenever possible.

\paragraph{Experimental evaluation.}
We tested the heuristic and the ILP on three different real world data sets
Gopher ($S$ on 8 species, $T$ on 26 gene taxa, 1083 instances, 
i.e., different topologies and heights for pairs of $S$ and $T$)~\cite{DataGopher},
Barrow (21 species, 88 gene taxa, 312 instances)~\cite{DataBarrow},
and Hamilton (36 species, 83 gene taxa, 99 instances)~\cite{DataHamilton}.
On a laptop with 4 cores, 8 GB of RAM, Ubuntu 20.04, and CPLEX 12.10 
we tested each heuristic and the ILP on each instance once with the default (start) embedding of $S$ from the input file
and~10 times with a random (start) embedding for $S$. 
A proper experimental evaluation is out of scope for this paper, but we observed the following:
\begin{itemize}
  \item The \VTT heuristic got a better result than the \FTT heuristic for 60--75\% of the instances,
  the same result for 6--27\%, and a worse result for 13--20\%. 
  For the Barrow instances, they improved the average number of 24.5 crossings of the default embeddings 
  to 10.3 (\FTT) and 7.2 (\VTT) or even to 6.6 and~5.7 using random starting embeddings of $S$.
  \item Concerning \FTT, the optimal solutions found by the ILP show that the \FTT heuristic 
  also found the optimal solution for about 50--55\% of the instances;
  e.g., for the Barrow instances, the \FTT heuristic had on average only 1.3 crossings more than the optimal.
  Concerning \VTT, the heuristics also got within zero to few crossings to the best ILP solution for Gopher instances.
  \item Both heuristics are sensitive to the initial embedding of $S$ 
  as the lowest number of crossings was achieved with a random start embedding for 70--90\% and for 44--75\% of the instances
  for \FTT and for \VTT, respectively. The results between start embeddings vary more for \VTT than for \FTT.  
  \item The \FTT and the \VTT heuristic run in a fraction of a second per instance, 
  while the ILP for the \FTT problem takes about 1--4s for most instances.
  The ILP for the \VTT problem only found solutions for the Gopher instance within reasonable time for some instances.
\end{itemize}
Since the heuristics are so fast, our recommendation is to run both heuristics for several different start embeddings of $S$
and then take the best found solution.

To the best of our knowledge, this is the first software to visualize MSC trees for the continuous linear model
and so we hope that this will help researchers in the emerging field of MSC to visualize their results.

\section*{Acknowledgments}
We thank the reviewers for their comments and 
J.~Douglas for providing us with the test data
and his helpful explanations concerning MSC.

\pdfbookmark[1]{References}{References} 
\bibliographystyle{abbrvurl}
\bibliography{sources}

\begin{thebibliography}{10}

\bibitem{AEWBS02}
B.~S. Arbogast, S.~V. Edwards, J.~Wakeley, P.~Beerli, and J.~B. Slowinski.
\newblock Estimating divergence times from molecular data on phylogenetic and
  population genetic timescales.
\newblock {\em Ann. Rev. Ecol. Syst.}, 33:707--740, 2002.
\newblock \href {https://doi.org/10.2307/3069277} {\path{doi:10.2307/3069277}}.

\bibitem{BBS05}
C.~Bachmaier, U.~Brandes, and B.~Schlieper.
\newblock Drawing phylogenetic trees.
\newblock In X.~Deng and D.-Z. Du, editors, {\em Int. Symp. Algorithms \&
  Computation (ISAAC)}, volume 3827 of {\em LNCS}, pages 1110--1121. Springer,
  2005.
\newblock \href {https://doi.org/10.1007/11602613_110}
  {\path{doi:10.1007/11602613_110}}.

\bibitem{DataBarrow}
L.~N. Barrow, H.~F. Ralicki, S.~A. Emme, and E.~M. Lemmon.
\newblock Species tree estimation of {North American} chorus frogs ({Hylidae:
  Pseudacris}) with parallel tagged amplicon sequencing.
\newblock {\em Mol. Phyl. Evol.}, 75:78--90, 2014.
\newblock \href {https://doi.org/10.1016/j.ympev.2014.02.007}
  {\path{doi:10.1016/j.ympev.2014.02.007}}.

\bibitem{DataGopher}
N.~M. Belfiore, L.~Liu, and C.~Moritz.
\newblock Multilocus phylogenetics of a rapid radiation in the genus {Thomomys
  (Rodentia: Geomyidae)}.
\newblock {\em Syst. Biol.}, 57(2):294--310, 2008.
\newblock \href {https://doi.org/10.1080/10635150802044011}
  {\path{doi:10.1080/10635150802044011}}.

\bibitem{BDMT98}
P.~Bertolazzi, G.~Di~Battista, C.~Mannino, and R.~Tamassia.
\newblock Optimal upward planarity testing of single-source digraphs.
\newblock {\em SIAM J. Comput.}, 27(1):132--169, 1998.
\newblock \href {https://doi.org/10.1137/S0097539794279626}
  {\path{doi:10.1137/S0097539794279626}}.

\bibitem{BGJO19}
J.~J. Besa, M.~T. Goodrich, T.~Johnson, and M.~C. Osegueda.
\newblock Minimum-width drawings of phylogenetic trees.
\newblock In Y.~Li, M.~Cardei, and Y.~Huang, editors, {\em Int. Conf. Combin.
  Optim. Appl. (COCOA)}, volume 11949 of {\em {LNCS}}, pages 39--55, 2019.
\newblock \href {https://doi.org/10.1007/978-3-030-36412-0_4}
  {\path{doi:10.1007/978-3-030-36412-0_4}}.

\bibitem{Bou10}
R.~R. Bouckaert.
\newblock {DensiTree}: Making sense of sets of phylogenetic trees.
\newblock {\em Bioinf.}, 26(10):1372--1373, 2010.
\newblock \href {https://doi.org/10.1093/bioinformatics/btq110}
  {\path{doi:10.1093/bioinformatics/btq110}}.

\bibitem{BBBNOSW12}
K.~Buchin, M.~Buchin, J.~Byrka, M.~N{\"o}llenburg, Y.~Okamoto, R.~I. Silveira,
  and A.~Wolff.
\newblock Drawing (complete) binary tanglegrams~-- hardness, approximation,
  fixed-parameter tractability.
\newblock {\em Algorithmica}, 62(1--2):309--332, 2012.
\newblock \href {https://doi.org/10.1007/s00453-010-9456-3}
  {\path{doi:10.1007/s00453-010-9456-3}}.

\bibitem{CDMP20}
T.~Calamoneri, V.~{Di Donato}, D.~Mariottini, and M.~Patrignani.
\newblock Visualizing co-phylogenetic reconciliations.
\newblock {\em Theoret. Comput. Sci.}, 815:228--245, 2020.
\newblock \href {https://doi.org/10.1016/j.tcs.2019.12.024}
  {\path{doi:10.1016/j.tcs.2019.12.024}}.

\bibitem{CDSJJB16}
F.~Chevenet, J.~Doyon, C.~Scornavacca, E.~Jacox, E.~Jousselin, and V.~Berry.
\newblock {SylvX}: A viewer for phylogenetic tree reconciliations.
\newblock {\em Bioinf.}, 32(4):608--610, 2016.
\newblock \href {https://doi.org/10.1093/bioinformatics/btv625}
  {\path{doi:10.1093/bioinformatics/btv625}}.

\bibitem{CFOLH10}
C.~Conow, D.~Fielder, Y.~Ovadia, and R.~Libeskind-Hadas.
\newblock Jane: A new tool for the cophylogeny reconstruction problem.
\newblock {\em Algorithms Molecul. Biol.}, 5(1):1--10, 2010.
\newblock \href {https://doi.org/10.1186/1748-7188-5-16}
  {\path{doi:10.1186/1748-7188-5-16}}.

\bibitem{Dou20}
J.~Douglas.
\newblock {UglyTrees}: A browser-based multispecies coalescent tree visualizer.
\newblock {\em Bioinf.}, 07 2020.
\newblock \href {https://doi.org/10.1093/bioinformatics/btaa679}
  {\path{doi:10.1093/bioinformatics/btaa679}}.

\bibitem{DH04}
A.~W.~M. Dress and D.~H. Huson.
\newblock Constructing splits graphs.
\newblock {\em Trans. Comput. Biol. Bioinf.}, 1(3):109--115, 2004.
\newblock \href {https://doi.org/10.1145/1041503.1041506}
  {\path{doi:10.1145/1041503.1041506}}.

\bibitem{FKP10}
H.~Fernau, M.~Kaufmann, and M.~Poths.
\newblock Comparing trees via crossing minimization.
\newblock {\em J. Comput. Syst. Sci.}, 76(7):593--608, 2010.
\newblock \href {https://doi.org/10.1016/j.jcss.2009.10.014}
  {\path{doi:10.1016/j.jcss.2009.10.014}}.

\bibitem{FJRY18}
T.~Flouri, X.~Jiao, B.~Rannala, and Z.~Yang.
\newblock Species tree inference with {BPP} using genomic sequences and the
  multispecies coalescent.
\newblock {\em Molecul. Biol. Evol.}, 35(10):2585--2593, 2018.
\newblock \href {https://doi.org/10.1093/molbev/msy147}
  {\path{doi:10.1093/molbev/msy147}}.

\bibitem{GJ79}
M.~R. Garey and D.~S. Johnson.
\newblock {\em Computers and Intractability: A Guide to the Theory of
  {NP}-Completeness}.
\newblock W. H. Freeman \& Co., San Francisco, 1979.

\bibitem{DataHamilton}
C.~A. Hamilton, A.~R. Lemmon, E.~M. Lemmon, and J.~E. Bond.
\newblock Expanding anchored hybrid enrichment to resolve both deep and shallow
  relationships within the spider tree of life.
\newblock {\em BMC Evol. Biol.}, 16(1):1--20, 2016.
\newblock \href {https://doi.org/10.1186/s12862-016-0769-y}
  {\path{doi:10.1186/s12862-016-0769-y}}.

\bibitem{HD09}
J.~Heled and A.~J. Drummond.
\newblock Bayesian inference of species trees from multilocus data.
\newblock {\em Mol. Biol. Evol.}, 27(3):570--580, 2009.
\newblock \href {https://doi.org/10.1093/molbev/msp274}
  {\path{doi:10.1093/molbev/msp274}}.

\bibitem{Hus09}
D.~H. Huson.
\newblock Drawing rooted phylogenetic networks.
\newblock {\em IEEE/ACM Trans. Comput. Biol. Bioinf.}, 6(1):103--109, 2009.
\newblock \href {https://doi.org/10.1109/TCBB.2008.58}
  {\path{doi:10.1109/TCBB.2008.58}}.

\bibitem{HRS10}
D.~H. Huson, R.~Rupp, and C.~Scornavacca.
\newblock {\em Phylogenetic Networks: Concepts, Algorithms and Applications}.
\newblock Cambridge Univ. Press, 2010.

\bibitem{KM22}
J.~{Klawitter} and T.~{Mchedlidze}.
\newblock Upward planar drawings with two slopes.
\newblock {\em J. Graph Algorithms Appl.}, 26(1):171--198, 2022.
\newblock \href {https://doi.org/10.7155/jgaa.00587}
  {\path{doi:10.7155/jgaa.00587}}.

\bibitem{KS20}
J.~Klawitter and P.~Stumpf.
\newblock Drawing tree-based phylogenetic networks with minimum number of
  crossings.
\newblock In D.~Auber and P.~Valtr, editors, {\em Graph Drawing and Network
  Visualization (GD)}, volume 12590 of {\em LNCS}, pages 173--180. Springer,
  2020.
\newblock \href {https://doi.org/10.1007/978-3-030-68766-3_14}
  {\path{doi:10.1007/978-3-030-68766-3_14}}.

\bibitem{KH08}
T.~H. Kloepper and D.~H. Huson.
\newblock Drawing explicit phylogenetic networks and their integration into
  splitstree.
\newblock {\em BMC Evol. Biol.}, 8(1):22, 2008.
\newblock \href {https://doi.org/10.1186/1471-2148-8-22}
  {\path{doi:10.1186/1471-2148-8-22}}.

\bibitem{Lim14}
M.~Lima.
\newblock {\em The Book of Trees: Visualizing Branches of Knowledge}.
\newblock Princeton Architectural Press, 2014.

\bibitem{MH16}
F.~K. Mendes and M.~W. Hahn.
\newblock Gene tree discordance causes apparent substitution rate variation.
\newblock {\em Syst. Biol.}, 65(4):711--721, 2016.
\newblock \href {https://doi.org/10.1093/sysbio/syw018}
  {\path{doi:10.1093/sysbio/syw018}}.

\bibitem{MM05}
D.~Merkle and M.~Middendorf.
\newblock Reconstruction of the cophylogenetic history of related phylogenetic
  trees with divergence timing information.
\newblock {\em Theory in Biosciences}, 123(4):277--299, 2005.
\newblock \href {https://doi.org/10.1016/j.thbio.2005.01.003}
  {\path{doi:10.1016/j.thbio.2005.01.003}}.

\bibitem{Pag15}
R.~Page.
\newblock Visualising geophylogenies in web maps using geojson.
\newblock {\em PLOS Currents}, 7, 2015.
\newblock URL: \url{https://www.ncbi.nlm.nih.gov/pmc/articles/PMC4481111}.

\bibitem{PN88}
P.~Pamilo and M.~Nei.
\newblock Relationships between gene trees and species trees.
\newblock {\em Mol. Biol. Evol.}, 5(5):568--583, 1988.
\newblock \href {https://doi.org/10.1093/oxfordjournals.molbev.a040517}
  {\path{doi:10.1093/oxfordjournals.molbev.a040517}}.

\bibitem{PMZ+13}
D.~H. Parks, T.~Mankowski, S.~Zangooei, M.~S. Porter, D.~G. Armanini, D.~J.
  Baird, M.~G.~I. Langille, and R.~G. Beiko.
\newblock Gengis 2: Geospatial analysis of traditional and genetic
  biodiversity, with new gradient algorithms and an extensible plugin
  framework.
\newblock {\em PLoS ONE}, 8(7):1--10, 2013.
\newblock \href {https://doi.org/10.1371/journal.pone.0069885}
  {\path{doi:10.1371/journal.pone.0069885}}.

\bibitem{RELY20}
B.~Rannala, S.~V. Edwards, A.~Leaché, and Z.~Yang.
\newblock The multi-species coalescent model and species tree inference.
\newblock In C.~Scornavacca, F.~Delsuc, and N.~Galtier, editors, {\em
  Phylogenetics in the Genomic Era}, chapter 3.3, pages 3.3:1--3.3:21. HAL,
  2020.
\newblock URL: \url{https://hal.archives-ouvertes.fr/hal-02535070v3}.

\bibitem{Rus13}
A.~Rusu.
\newblock Tree drawing algorithms.
\newblock In R.~Tamassia, editor, {\em Handbook on Graph Drawing and
  Visualization}, chapter~3, pages 155--192. Chapman and Hall/CRC, 2013.

\bibitem{SS20}
D.~Schrempf and G.~Sz{\"o}ll\H{o}si.
\newblock The sources of phylogenetic conflicts.
\newblock In C.~Scornavacca, F.~Delsuc, and N.~Galtier, editors, {\em
  Phylogenetics in the Genomic Era}, chapter 3.1, pages 3.1:1--3.1:23. HAL,
  2020.
\newblock URL: \url{https://hal.archives-ouvertes.fr/hal-02535070v3}.

\bibitem{Treevis}
H.~Schulz.
\newblock Treevis.net: A tree visualization reference.
\newblock {\em IEEE Comput. Graphics Appl.}, 31(6):11--15, 2011.
\newblock \href {https://doi.org/10.1109/MCG.2011.103}
  {\path{doi:10.1109/MCG.2011.103}}.

\bibitem{SS03}
C.~Semple and M.~A. Steel.
\newblock {\em Phylogenetics}, volume~24 of {\em Oxford Lect. Ser. Math. \& Its
  Appl.}
\newblock Oxford Univ. Press, 2003.

\bibitem{SSSLA07}
B.~Sennblad, E.~Schreil, A.-C.~B. Sonnhammer, J.~Lagergren, and L.~Arvestad.
\newblock primetv: A viewer for reconciled trees.
\newblock {\em BMC Bioinf.}, 8(148), 2007.
\newblock \href {https://doi.org/10.1186/1471-2105-8-148}
  {\path{doi:10.1186/1471-2105-8-148}}.

\bibitem{SNM12}
A.~Spillner, B.~T. Nguyen, and V.~Moulton.
\newblock Constructing and drawing regular planar split networks.
\newblock {\em IEEE/ACM Trans. Comput. Biol. Bioinf.}, 9(2):395--407, 2012.
\newblock \href {https://doi.org/10.1109/TCBB.2011.115}
  {\path{doi:10.1109/TCBB.2011.115}}.

\bibitem{TK20}
I.~G. Tollis and K.~G. Kakoulis.
\newblock Algorithms for visualizing phylogenetic networks.
\newblock {\em Theoret. Comput. Sci.}, 835:31--43, 2020.
\newblock \href {https://doi.org/10.1016/j.tcs.2020.05.047}
  {\path{doi:10.1016/j.tcs.2020.05.047}}.

\bibitem{WWB03}
I.~J. Wilson, M.~E. Weale, and D.~J. Balding.
\newblock Inferences from {DNA} data: Population histories, evolutionary
  processes and forensic match probabilities.
\newblock {\em J. Royal Stat. Soc. Ser. A}, 166(2):155--188, 2003.
\newblock \href {https://doi.org/10.1111/1467-985X.00264}
  {\path{doi:10.1111/1467-985X.00264}}.

\end{thebibliography}

\clearpage
\pdfbookmark[0]{Appendix}{toc.appendix}
\section*{Appendix}
\appendix
\label{app:appendix}

\section{Full Proof of Theorem 2}
\label{app:np}

\fttNP*
\begin{proof}
As with the \VTT problem, it is easy to see that the \FTT problem is in NP.
Given an instance $\croc{S, T, \varphi}, \pi(S), k$ as well as a leaf order $\pi(T)$, 
we can check in polynomial time whether this yields a drawing with at most $k$ crossings.
To prove NP-hardness, we use a reduction from \textsc{Max-Cut}.

For a \textsc{Max-Cut} instance $G, c$, we
construct an instance $\croc{S, T, \varphi}, \pi(S), k$ of the \FTT problem 
by devising a species tree $S$ with leaf order $\pi(S)$, a gene tree~$T$, 
a leaf mapping $\varphi$, and a positive integer $k$.
Let $V(G) = \set{v_1, \ldots, v_n}$ and 
let~$\set{A, B}$ be some partition of $V(G)$.
Our construction consists of three parts and uses several different gadgets, 
which are described in detail below.
\Cref{fig:np:f:overview:app} shows our constructions for the \textsc{Max-Cut} instance from \cref{fig:np:maxCut}.
On the left side, we have a \emph{vertex gadget} for each vertex $v_i$
where we simulate $v_i$ being in either partition.
To these vertex gadgets, we connect an \emph{edge gadget} for each edge
that also has a leaf on the right side.
Using \emph{spacer gadgets}, the third leaf of an edge gadget is horizontally placed such that the root of the edge gadget 
lies in the centre, where we place a \emph{cut gadget}.
The cut gadget will induce $n^4$ crossings with the incoming edge of the root of each edge gadget 
only if the respective vertices are in the same partition.
While some parts of our construction induce a fixed number of crossings,
others cause in total at most $n^3$ crossings.
Hence, as in the proof of \cref{clm:vtt:np},
we can set $k$ with respect to $c$
such that the instance admits a drawing with at most $k$ crossings
if and only if $G$ admits a cut with at least~$c$~edges.

\begin{figure}[b]
  \centering
  \includegraphics[page=2]{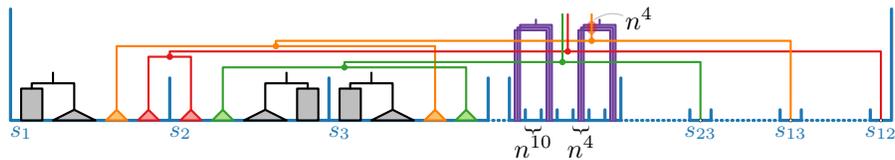}
  \caption{Sketch of the reduction of the graph from \cref{fig:np:maxCut} 
  		to a rectangular tree-in-tree drawing with fixed species tree embedding.
		Each edge gadget is drawn in the respective color; 
		the one for $v_1v_3$ has $n^4$ more crossings with the {\pu cut gadget}.}
  \label{fig:np:f:overview:app} 
\end{figure}

A \emph{spacer gadget} consists of one species containing a wide expanded leaf of desired width.
The gene tree parts of all spacer gadgets are tied together 
with a tree that has all internal vertices above all other gadgets,
which in turn is connected to the rest of $T$ close to the root.
Hence, the spacer gadgets induce a fixed number of crossings.     

A \emph{vertex gadget} for a vertex $v_i$
consists of seven species, namely, $s_i^j$ with $j \in \set{-3, -2, -1, 0, 1, 2, 3}$; see \cref{fig:np:f:vertexGadget}.
For each edge incident to $v_i$,
there is one wide expanded leaf of width $n^7$ of the respective edge gadget in $s_i^0$.
Moreover,~$s_i^0$ contains a \emph{partitioner} tree
that consists of a thick expanded leaf of width~$n^7$
and a wide expanded leaf of width~$n^{10}$.
There are two cherries with leaves in~$s_i^{-3}$ and~$s_i^1$
as well as in~$s_i^{-1}$ and~$s_i^3$. 
The cherries are connected with a vertex~$p_i$.
By setting the heights appropriately and 
with one spacer gadget each in~$s_i^{-2}$ and in~$s_i^2$,
we enforce that the partitioner cuts always through the horizontal line segments of both cherries
(causing always~$2n^7$ crossings).
Furthermore, only if the partitioner is all the way to the left or all the way to the right in~$s_i^0$,
then its thick expanded leaf does not cross with the horizontal line segment through~$p_i$; see again \cref{fig:np:f:vertexGadget}.
Otherwise, there are another~$n^7$ crossings which is always more than~$c$ 
and would thus obstruct any solution.
Hence, the partitioner being at the left or at the right corresponds to~$v_i$ being in~$A$ or~$B$, respectively.
The wide expanded leaf of the partitioner, which always lies next to the thick expanded leaf 
and below the horizontal line segment of~$p_i$,
has the effect that the horizontal shift for the attached tree gadgets is substantial enough
between the two configurations.

\begin{figure}[t]
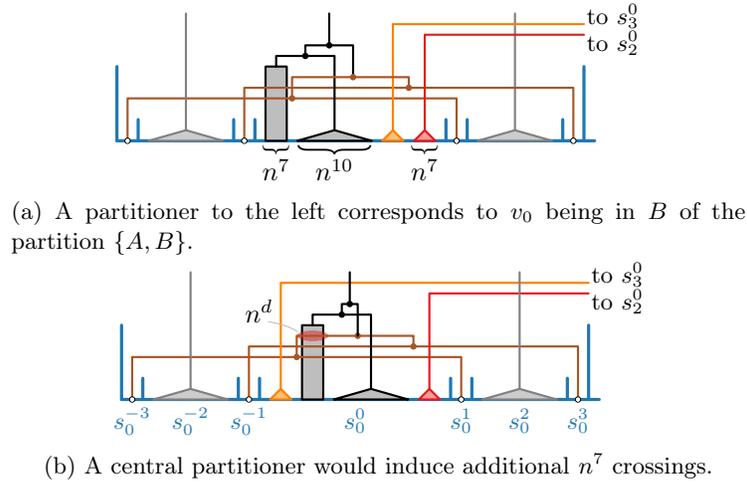

  \centering
      \begin{subfigure}[t]{0.80 \linewidth}
		\centering
		\includegraphics[page=3]{NPmaxCutFixed}
		\caption{A partitioner to the left corresponds to $v_0$ being in $B$ of the partition $\set{A, B}$.}
		\label{fig:np:f:vertexGadget:A}
	\end{subfigure}
	\begin{subfigure}[t]{0.80 \linewidth}
		\centering
		\includegraphics[page=5]{NPmaxCutFixed}
		\caption{A central partitioner would induce additional $n^7$ crossings.}
		\label{fig:np:f:vertexGadget:bad}
	\end{subfigure}
  \caption{The vertex gadget for $v_0$ with its seven species has two edge gadgets attached.}
  \label{fig:np:f:vertexGadget} 
\end{figure}

We assume that the edges are lexicographically ordered based on the indices of their vertices. 
The \emph{edge gadget} for an edge~$v_iv_j$
contains a cherry on the expanded wide leaves in~$s_i^0$ and~$s_j^0$
and has the root~$p_{ij}$ that connects the cherry with another leaf in a species~$s_{ij}$.
The height of~$p_{ij}$ is set below some value~$y^\star$.
With spacer gadgets between these species (such as~$s_{ij}$) on the right,
we can enforce that~$p_{ij}$ lies in the central gap of the cut gadget
only if (i) the expanded leaves of edge gadgets in~$s_i^0$ and~$s_j^0$ are lexicographic ordered from left to right
and (ii)~$v_i$ and~$v_j$ are in different partitions.
However, if the order is different or~$v_i$ and~$v_j$ are in the same partition,
then~$p_{ij}$ lies inside a tower of the cut gadget.
The gene tree parts of the edge gadgets merge into a tree above the cut~gadget.

The \emph{cut gadget} consists of two \emph{towers} 
with a spacer gadget of~$\Oh(1)$ width between them; see again \cref{fig:np:f:overview}.
Each tower spans~$2n^4+1$ species, where the central one contains a spacer gadget of width~$n^{10}$.
It further contains~$n^4$ cherries where the two leaves 
of each cherry are the only leaf in their respective species
and that are~$n^4+1$ species apart.
The root of each such cherry gets a height between~$y^\star$ and~$y^\star + \varepsilon$
for some appropriate small~$\varepsilon$.
The cherries of each tower and then the two towers are connected into a single tree, 
which gets connected to the rest of~$T$ somewhere close to the root.
Now, if~$p_{ij}$ lies above the central species of the cut gadget,
then the vertical segment incident to~$p_{ij}$ does not cut through the tower.
If however~$v_i$ and~$v_j$ are, say, both in~$A$, then~$p_{ij}$ is placed
further to the left (the partitioner causes a shift of~$(n^{10} + n^7)/4$).
In particular, even if the edge gadgets in~$s_i^0$ and~$s_j^0$ have a different order,
$p_{ij}$ is placed inside the tower.
Hence, in this case, the vertical segment incident to~$p_{ij}$ cuts through the~$n^4$ cherries of the left tower.
In other words,~$v_iv_j$ not being in the cut corresponds to an additional~$n^4$ crossings.
\end{proof}

\section{Heuristics}
\label{app:heuristics}
In this section, we describe the heuristics that try to minimize the number of crossings
in a tree-in-tree drawing of an MSC tree $\croc{S, T, \varphi}$ in more detail.
We consider first the rectangular style for both the case 
when a leaf order $\pi(S)$ is given (\FTT problem) and when it is not (\VTT problem).
Afterwards we explain what changes have to be made to the heuristics
when using them for the proportional style.

\subsection{Heuristic for the \FTT Problem}
\label{app:sec:ftt}
Let $\croc{S, T, \varphi}$ be an MSC tree and let $\pi(S)$ be a leaf order of $S$.
Based on $\pi(S)$ the drawing of $S$ can already be computed.
The heuristic then draws $T$ within~$S$ in the rectangular tree-in-tree style.
Recall that to this end, the heuristic 
(i) goes through the inner vertices in order of increasing height
and (ii) when the subtree~$T(v)$ of an inner vertex~$v$ has leaves in more than one species,
then any unplaced leaves are put on a left stack or a right stack of their respective species.
More precisely, for each species~$s$, 
the leaves in~$s$ are sorted from the left towards the centre with a \emph{left stack}
and from the right towards the centre with a \emph{right stack}.
Initially no leaves of~$T$ are on any left or right stack. 
We thus say that a leaf is \emph{unfixed} and 
that it lies in the \emph{central bucket} of its respective species; see \cref{fig:heuristic}.
On the other hand, a leaf is called \emph{fixed} if it is put on the left or right stack of its respective species.
An inner vertex~$v$ is called \emph{fixed} if all leaves in~$T(v)$ are fixed.
In this case, the full drawing of~$T(v)$ has been computed.
At the end all leaves are fixed and the drawing of~$T$ is completed.
The order of the leaves in a species has emerged as the concatenation of its two stacks. 
Furthermore, it is ensured that any subtree~$T'$ fully contained within a species
the leaves of~$T'$ are sorted such that~$T'$ is drawn planar. 

During the executing of the heuristic,
we maintain for each vertex whether it is fixed or unfixed.
Note that an inner vertex~$v'$ can only be unfixed
if all leaves in~$T(v')$ lie in the same species~$s$.
In this case, we also store the species~$s$ for~$v'$.

We now describe the different cases when handling an inner vertex~$v$.
Let~$x$ and~$y$ be the two children of~$v$. 
\begin{enumerate}[leftmargin=*,label=Case F\arabic*:,]
\item Suppose that all leaves of~$T(v)$ lie in the same species~$s$. 
Then~$x$,~$y$, and~$v$ remain unfixed and the leaves of~$T(v)$ remain in the central bucket of~$s$. 
Using the information stored for~$x$ and~$y$, this case can be handled in~$\Oh(1)$~time.
\item Suppose that~$x$ and~$y$ are unfixed and 
that the leaves of~$T(x)$ and~$T(y)$ lie in distinct species~$s_x$ and~$s_y$ respectively.
Without loss of generality, assume that~$s_x$ lies left of~$s_y$.
Then the leaves in~$T(x)$ are put on the right stack of~$s_x$ 
in an order such that~$T(x)$ can be drawn planar. 
The drawing of~$T(x)$ is then computed; see again \cref{fig:heuristic:ftt}.
$T(y)$ is handled analogously using the left stack of~$s_y$.
This takes~$\Oh(\abs{L(T(v))})$ time.
\item Suppose that~$x$ is unfixed and~$y$ is fixed.
Let~$s_x$ be the species containing the leaves of~$T(x)$.
If~$y$ lies to the left (right) or above the left (resp. right) stack of~$s_x$, 
then we place the leaves of~$T(x)$ on the left (resp. right) stack of~$s_x$.
If~$y$ lies between the left and right stack of~$s_x$,
then we place the leaves of~$T(x)$ on, say, the left stack of~$s_x$.
As before, the leaves are added to a stack such that a planar drawing of~$T(x)$ can be computed.
This case takes~$\Oh(L(T(x)))$~time.
\item If both~$x$ and~$y$ are already fixed, then so is~$v$.
The drawing of~$T(v)$ is completed by drawing~$v$ and
the edges~$vx$ and~$vy$ in~$\Oh(1)$ time.
\end{enumerate}
Note that the number of fixed and drawn vertices correlates linearly 
with the running time. Hence, the heuristic runs in~$\Oh(n)$ time.

\begin{figure}[tbh]
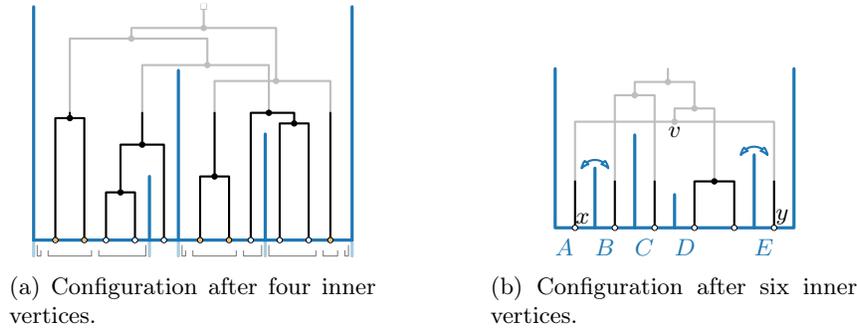

  \centering
    \begin{subfigure}[t]{0.40 \linewidth}
		\centering
		\includegraphics[page=2]{heuristic}
		\caption{Configuration after four inner vertices.}
		\label{fig:heuristic:ftt:first}
	\end{subfigure}
	$\qquad\qquad$
	\begin{subfigure}[t]{0.40\linewidth}
		\centering
		\includegraphics[page=3]{heuristic}
		\caption{Configuration after six inner vertices.}
		\label{fig:heuristic:ftt:second}
	\end{subfigure}
  \caption{The heuristic sorts the leaves in each species from the sides towards the centre
  by using a left stack and a right stack for each species (plus a central bucket of unplaced (orange) leaves),
  here on the example from \cref{fig:mscExamples,fig:style}.}
  \label{fig:heuristic:ftt}
\end{figure}

\subsection{Heuristic for the \VTT Problem}
For the \VTT problem, our heuristic computes for a given an MSC tree~$\croc{S, T, \varphi}$
a leaf order~$\pi(S)$ given by the rotation of each inner node of~$S$
alongside a leaf order~$\pi(T)$.
In order to use the heuristic for the \FTT problem to compute~$\pi(T)$,
when handling an inner vertex~$v$ of~$T$,
the order of the species that contain leaves of~$T(v)$ is set first.
We say a node of~$S$ is \emph{set} if its rotation has been set;
otherwise we say it is \emph{unset}.
Here the idea to minimize crossings is to set the rotations of the inner nodes of~$S$
such that subtrees of~$T$ horizontally span over few species.

Suppose that we handle an inner vertex~$v$ with children~$x$ and~$y$.
We would then like~$x$ and~$y$ to be close together.
Suppose $x$ lies in the branch ending at node $x'$ of $S$
and $y$ lies in the branch ending at node $y'$ of $S$.
Let~$S(x')$ be the minimal rooted phylogenetic subtree of~$S$ on all species
that contain a leaf of~$T(x)$; define~$S(y')$ analogously.
Note that $S(x')$ and $S(y')$ are rooted phylogenetic subtrees of $S$ and hence contain no nodes of degree two.
Furthermore, they may contain species that do not contain a leaf of $T(x)$ or $T(y)$, respectively.
We distinguish the following cases.
\begin{enumerate}[leftmargin=*,label=Case V\arabic*:,]
  \item Suppose that~$S(x')$ and~$S(y')$ are disjoint.
  Let~$w$ be the lowest common ancestor of~$x'$ and~$y'$ in~$S$.
  Without loss of generality, assume that~$x'$ lies in the left subtree of~$w$.
  Then set the rotation of each unset node on the path from~$x'$ to~$w$ (excluding~$w$)
  such that~$x'$ lies in its right subtree.
  Analogously, set the rotation of each unset node on the path from~$y'$ to~$w$ (excluding~$w$)
  such that~$y'$ lies in its left subtree.
  An example is shown in \cref{fig:heuristic:vtt:first}.
  \item Suppose that~$S(y')$ is a subtree of~$S(x')$
  and that no leaf of~$T(x)$ lies in a species of~$S(y)$.
  On the path from~$y'$ to~$x'$,
  let~$w$ be the first node with a species in its subtree that contains a leaf of~$T(x)$.
  Without loss of generality, assume that~$y'$ lies in the right subtree of~$w$.
  Then set the rotation of each unset node on the path from~$y'$ to~$w$ (excluding~$w$)
  such that~$y'$ lies in its left subtree; see \cref{fig:heuristic:vtt:second}.
  The case where~$S(x')$ and~$S(y')$ have reversed roles is handled analogously.
  \item Suppose that neither of the two previous cases applies. 
  Note that then either~$y'$ lies on the path from a species containing a leaf of~$T(x)$ to~$x'$
  or this occurs with~$x'$ and~$y'$ in reversed roles.
  Since all nodes on such a path are already set,
  the species of~$S(x')$ and~$S(y')$ that contain leaves of~$T(x)$ and~$T(y)$ cannot
  be moved closer together; this might not even be clearly defined. 
  Hence, in this case, no node of~$S$ gets set.
\end{enumerate}
Note that after the respective case has been applied,
the relative order of the species containing leaves of~$T(v)$ is fixed.
The heuristic can thus proceed with~$T(v)$ as described in \cref{app:sec:ftt}.
However, the whole subtree of~$S$ containing~$T(v)$ can later change its horizontal position and be horizontally mirrored.
Therefore, the x-coordinate of a handled vertex~$v$ is stored relative to its parent
or relative to~$S(v')$ if its parent has not been handled yet.

To determine which case applies and then to execute its respective procedure,
the heuristic has to determine~$S(x')$ and~$S(y')$ as well as walk along paths in~$S$.
This requires a constant number of traversals of~$S$ and~$T$ and
therefore for each inner vertex of~$T$ can be handled in~$\Oh(n)$ time. 
This yields an overall running time in~$\Oh(n^2)$.

\begin{figure}[tbh]
  \centering
    \begin{subfigure}[t]{0.3 \linewidth}
		\centering
 		\includegraphics[page=3]{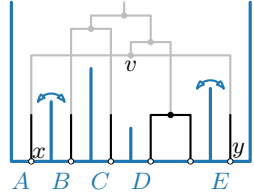}
		\caption{For the second inner vertex of $T$,
		Case 1 applies since the subtrees $S(x)$ and $S(y)$ are disjoint; 
		two nodes of $S$ are rotated.}
		\label{fig:heuristic:vtt:first}
	\end{subfigure}
	\hfill
	\begin{subfigure}[t]{0.3 \linewidth}
		\centering
 		\includegraphics[page=4]{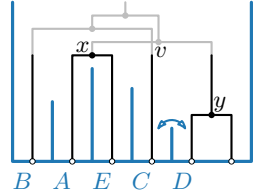}
		\caption{For the third inner vertex of $T$,
		Case 2 applies since the subtree $S(y)$ is a proper subtree of $S(y)$ 
		and no leaf of $T(x)$ lies in species $D$;
		one node of $S$ is rotated.}
		\label{fig:heuristic:vtt:second}
	\end{subfigure}
	\hfill
	\begin{subfigure}[t]{0.3 \linewidth}
		\centering
		\includegraphics[page=5]{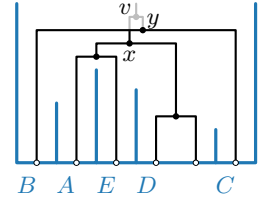}
		\caption{For the root of $T$, Case 3 applies. The embedding of $S$ is also already fixed.}
		\label{fig:heuristic:vtt:third}
	\end{subfigure}
  \caption{The heuristic for the VTT problem rotates inner nodes of the species tree 
  to bring the leaves of the currently handled gene subtree closer together. The stacks and central bucket are not shown.}
  \label{fig:heuristic:vtt}
\end{figure}

\subsection{Heuristics for the Proportional Style}
The \FTT heuristic works the same for the proportional style as it does for the rectangular style.
The only noteworthy difference is that first computing the drawing of $S$ 
and then computing the drawings of subtrees of $T$ is a bit more~involved.

Using the \VTT heuristic for the proportional style requires that 
the computed drawings of subtrees of $T$ are stored differently 
if one wants to have a quadratic running time.
This is the case because when a node $u$ of $S$ gets rotated,
the drawings of a subtree of $T$ in the subtree $S(u)$
before and after the rotation might have different slopes for every edge.
Therefore, instead of storing the x-coordinate or x-offset of a vertex $v$, 
we can store the position of $v$ relative to the total width of the branch $v$ lies in.
For example, if $v$ initially lies at 20\% of the width  from the left delimiter, 
then it will lie 20\% of the width from the right delimiter after the rotation.
Once the heuristic has reached the root of $T$,
the drawing of $T$ and in particular of the edges of $T$ can be completed
using these relative positions.

\section{ILP Formulation}
\label{app:ilp}

In this section, we describe an ILP formulation for the \VTT problem.
We are thus given an MSC tree $\croc{S, T, \varphi}$ and want
to compute leaf orders $\pi(S)$ and~$\pi(T)$ for $S$ and $T$, respectively,
such that a rectangular tree-in-tree drawing of~$\croc{S, T, \varphi}$
with $\pi(S)$ and~$\pi(T)$ has the minimum number of crossing among all
rectangular tree-in-tree drawings of $\croc{S, T, \varphi}$.
We can obtain a formulation for the \FTT problem by turning certain variables
into constants according to the given input as described at the end of this section.

We first describe the variables and constraints to model the input $S$, $T$, and~$\varphi$.
Next, we show how we model $\pi(S)$, $\pi(T)$ and coordinates to describe a rectangular tree-in-tree drawing.
We can then explain how crossings are recognized and how the number of crossings can thus be minimized.  

\paragraph{Model of input.}  
Recall that $T$ and $S$ have vertex sets $V(S)$ and $V(T)$ and leaf sets $L(S)$ and $L(T)$, respectively.
We let $\rho_S$ and $\rho_T$ denote the roots of $S$ and $T$, respectively.
The input is then described as follows.
First, the constants that model $S$:
\begin{align*}
        n_S   &= \abs{L(S)} \\
	\alpha(s) &\in V(S) \text{ is the first child of $s$ } && \text{for } s \in V(S) \setminus L(S)\\
	\beta(s) &\in V(S) \text{ is the second child of $s$ } && \text{for } s \in V(S) \setminus L(S)\\
	\gamma(s) &\in V(S) \text{ is the parent of $s$ } && \text{for } s \in V(S) \setminus \set{\rho_S}\\
\end{align*}
\noindent
Next, $T$ is modelled analogously. However, here we also need the heights (y-coordinates) of the vertices
as well as the subset of leaves of a subtree $T(v)$, that is, the \emph{clade} $L(v)$ of $v$.
\begin{align*}
        n_T   &= \abs{L(T)} \\
	\alpha(v) &\in V(T) \text{ is the first child of $v$ } && \text{for } v \in V(T) \setminus L(T)\\
	\beta(v) &\in V(T) \text{ is the second child of $v$ } && \text{for } v \in V(T) \setminus L(T)\\
	\gamma(v) &\in V(T) \text{ is the parent of $v$ } && \text{for } v \in V(T) \setminus \set{\rho_T}\\
	y(v) &\in \mathbb{Q}^+ \text{ is the y-coordinate of $v$ } && \text{for } v \in V(T)\\
	L(v) &\subseteq L(T) \text{ is the clade of $v$ } && \text{for } v \in V(T)\\
\end{align*}
\noindent
To model $\varphi$, we use the following constants:
\begin{align*}
	\varphi(v) &\in V(S) \text{ is the root of $S(v')$ (details below)} && \text{for } v \in V(T)\\
	\varphi^{-1}(s) &\subset V(T) \text{ is the set of leaves mapped to $s$ } && \text{for } s \in V(S)\\
\end{align*}
\noindent
Note that we extended $\varphi$ to inner vertices of $T$.
More precisely, for an inner vertex~$v$ of $T$ that lies in the branching ending at node $v'$ of $S$,
we have that $\varphi(v)$ is the root of $S(v')$
with, as we recall from above, $S(v')$ defined as the minimal phylogenetic subtree of $S$
that contains the clade of $v$.
If the clade of $v$ is completely mapped to a species, (i.e. $\varphi(v) \in L(S)$), 
we will enforce that the clade $L(v)$ forms an interval among the leaves in $L(T)$.

Note that a crossing can only occur between a horizontal line segment $f_u$ and a vertical line segment $g_v$.
Suppose $f_u$ goes through a vertex $u$ of $T$.
The y-coordinate of $f_u$ is thus given by the y-coordinate of $u$
and the x-coordinates of $f_u$ are given by the x-coordinates of the two children $\alpha(u)$ and $\beta(u)$ of $u$.
Suppose $g_v$ ends at a vertex $v$.
The x-coordinate of $g_v$ is thus given by the x-coordinate of $v$
and the y-coordinates are given by the y-coordinates of $v$ and the parent $\gamma(v)$ of $v$.
To decide whether $f_u$ and $g_v$ cross, we have to determine 
whether they overlap both vertically and horizontally. 
Since a vertical overlap is fully determined by the input, we can precompute this
and pass it to ILP as $a(u,v)$:
\begin{align*}
	a(u,v) &\in \set{0, 1} \text{ whether $f_u$ and $g_v$ overlap vertically} && \text{for } u\in V(T) \setminus L(T)\\
		&								&& \text{ and }v \in V(T) \setminus \set{\rho_T}
\end{align*}

\paragraph{Model of drawing.}
The main variables to model a drawing are the variables for the x-coordinates of the vertices.
In fact, only the variables for x-coordinates of the leaves are actually decision variables,
while all other variables will be auxiliary variables required to check feasibility
or to count crossings. In the following description of the variables,
$f_u$ is the horizontal line segment passing through $u$ and $g_v$ is a vertical line segment ending at $v$ 
as defined above. 
\begin{align*}
	\intertext{\textnormal The x-coordinates of the vertices of $T$:}
	x_v &\in \set{1, \ldots, n_T} && \text{for } v \in L(T)\\
	x_v &\in \mathbb{Q} && \text{for } v \in V(T) \setminus L(T)\\
	\intertext{\textnormal The intended meaning of $\bar{l}_{u, v} = 0$ ($\bar{r}_{u, v} = 0$) is that $g_v$ is left (resp. right) of $f_u$:}	
	\bar{l}_{u, v} &\in \set{0, 1} && \text{for } u \in V(T) \setminus L(T), v \in V(T) \setminus \set{\rho_T}\\
	\bar{r}_{u, v} &\in \set{0, 1} && \text{for } u \in V(T) \setminus L(T), v \in V(T) \setminus \set{\rho_T}\\
	\intertext{\textnormal The intended meaning of $z_{u, v} = 1$ is that $f_u$ horizontally overlaps with $g_v$:}
	z_{u, v} &\in \set{0, 1} && \text{for } u \in V(T) \setminus L(T), v \in V(T) \setminus \set{\rho_T}\\
	\intertext{\textnormal The intended meaning of $I^l_v = 42$ ($J^l_s = 42$) is that the
	leftmost leaf (resp. species) in the clade of $v$ (resp. $s$) has position $42$ in the permutation of $L(T)$ (resp. $L(S)$).}
	I^l_v &\in \set{1, \ldots, n_T} && \text{for } v \in V(T)\\
	I^r_v &\in \set{1, \ldots, n_T} && \text{for } v \in V(T)\\
	J^l_s &\in \set{1, \ldots, n_T} && \text{for } s \in V(S)\\
	J^r_s &\in \set{1, \ldots, n_T} && \text{for } s \in V(S)
\end{align*}
Note that we use the notation $x_v$ for the x-coordinate of $v$ but $y(v)$ for the y-coordinate of $v$,
since $x_v$ is a variable whereas $y(v)$ is a given constant.

For a permutation to represent a feasible solution we need to check
whether the leaves mapped to each species form a consecutive partial permutation.
This can be achieved by propagating interval bounds from the leaves towards the root of $S$. 
The equivalent property must be checked for inner vertices of $T$ 
whose clade is contained within a single species.

Concerning the overlap of $f_u$ and $g_v$, note
that they overlap horizontally only if $x_v \in [x_{\alpha(u)}..x_{\beta(u)}]$.
This can be check with the equation $((x_{\beta(u)} - x_v) \cdot (x_v - x_{\alpha(u)})) \ge 0$.
In order to linearize these constraints 
we have an intermediate step and the auxiliary variables $\bar{l}$ and $\bar{r}$.
These check whether $x_v$ is completely left or right of the interval, respectively.
In accordance with constraints below, the solver may then set $\bar{l}_{u, v}$ or $\bar{r}_{u, v}$ to 0, respectively,
in order to minimize the objective function.
If on the other hand neither is true, $x_v$ lies in $[x_{\alpha(u)}..x_{\beta(u)}]$ and a horizontal overlap occurs.

We use the following set of constraints to enforce the desired behavior for all the variables. 
\begin{align*}
	\intertext{\textnormal Place leaves into the interval of the species they belong to:}
	x_v &\geq J^l_{\varphi(v)} && \text{for } v \in L(T)\\
	x_v &\leq J^r_{\varphi(v)} && \text{for } v \in L(T)\\
	\intertext{\textnormal Distinct leaves get distinct x-coordinates:}
	x_u &\neq x_v && \text{for } u,v\in L(T), u \neq v\\
	\intertext{\textnormal Compute x-coordinates of inner vertices of $T$:}
	x_v &= (x_{\alpha(v)} + x_{\beta(v)}) / 2 && \text{for } v \in V(T) \setminus L(T)\\
	\intertext{\textnormal Examine horizontal precendences:}
	\bar{l}_{u, v}&\geq (x_v - x_{\alpha(u)})/n_T && \text{for } u \in V(T) \setminus L(T), v \in V(T) \setminus \set{\rho_T}\\
	\bar{l}_{u, v}&\geq (x_v - x_{\beta(u)})/n_T && \text{for } u \in V(T) \setminus L(T), v \in V(T) \setminus \set{\rho_T}\\
	\bar{r}_{u, v}&\geq (x_{\alpha(u)} - x_v)/n_T && \text{for } u \in V(T) \setminus L(T), v \in V(T) \setminus \set{\rho_T}\\
	\bar{r}_{u, v}&\geq (x_{\beta(u)} - x_v)/n_T && \text{for } u \in V(T) \setminus L(T), v \in V(T) \setminus \set{\rho_T}\\
	\intertext{\textnormal Calculate whether $g_v$ horizontally overlaps with $f_u$:}
	z_{u, v} &\geq \bar{l}_{u, v} + \bar{r}_{u, v} -1 && \text{for } u\in V(T) \setminus L(T), v \in V(T) \setminus \set{\rho_T}\\
	\intertext{\textnormal Propagate interval limits bottom-up through $T$:}
	I^l_v & \leq I^l_{\alpha(v)} && \text{for } v \in V(T) \setminus L(T)\\
	I^l_v & \leq I^l_{\beta(v)} && \text{for } v \in V(T) \setminus L(T)\\
	I^r_v & \geq I^r_{\alpha(v)} && \text{for } v \in V(T) \setminus L(T)\\
	I^r_v & \geq I^r_{\beta(v)} && \text{for } v \in V(T) \setminus L(T)\\
	\intertext{\textnormal Propagate interval limits bottom-up through $S$:}
	J^l_s & \leq J^l_{\alpha(s)} && \text{for } s \in V(S) \setminus L(S)\\
	J^l_s & \leq J^l_{\beta(s)} && \text{for } s \in V(S) \setminus L(S)\\
	J^r_s & \geq J^r_{\alpha(s)} && \text{for } s \in V(S) \setminus L(S)\\
	J^r_s & \geq J^r_{\beta(s)} && \text{for } s \in V(S) \setminus L(S)\\
	\intertext{\textnormal Check interval limits where appropriate:}
	I^l_v &= x_v && \text{for } v \in L(T)\\
	I^r_v &= x_v && \text{for } v \in L(T)\\
	I^r_v - I^l_v + 1 &= \abs{L(v)} && \text{for } v \in V(T) \setminus L(T), \varphi(v) \in L(S)\\
	J^r_s - J^l_s + 1 &= \abs{\varphi^{-1}(s)} && \text{for } s \in V(S)
\end{align*}

\paragraph*{Counting crossings.}
Since a crossing happens if and only if the two line segments overlap both
horizontally and vertically, the objective function simply needs to count
how often these two properties coincide. Note that $a(u,v)$ is part of the
input and hence the objective function is linear:

\[\text{minimize }\sum\limits_{\substack{u\in V(T) \setminus L(T)\\v\in V(T)\setminus \set{\rho_T}\\y(u)>y(v)}} a(u,v) \cdot z_{u,v}\]

\paragraph*{Fixed species tree.}

The ILP formulation can also be used for the \FTT problem where an order of the species is part of the input.
In terms of the variables and constraints above, 
this means that for all $s \in L(S)$ the variables $J^l_s$ and $J^r_s$ become constants.

\section{Example Drawings}
\label{app:examples}
We give drawings for one instance of each of the three data sets used in our experiments.

\begin{figure}[htb]
  \centering
    \begin{subfigure}[t]{\linewidth}
		\centering
 		\includegraphics[width=0.9\linewidth]{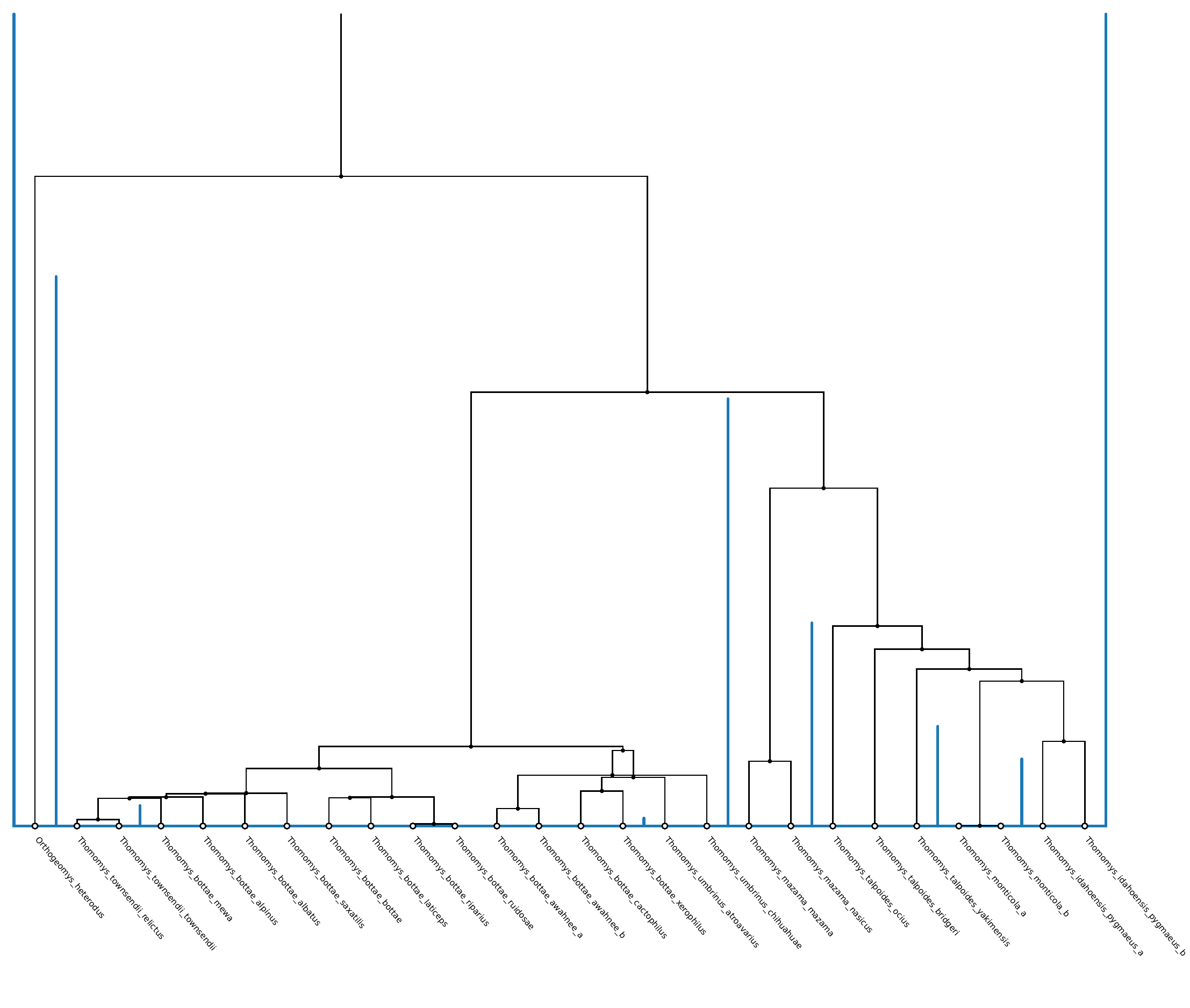} 
		\caption{Rectangular drawing style.}
	\end{subfigure}
	\begin{subfigure}[t]{\linewidth}
		\centering
 		\includegraphics[width=0.9\linewidth]{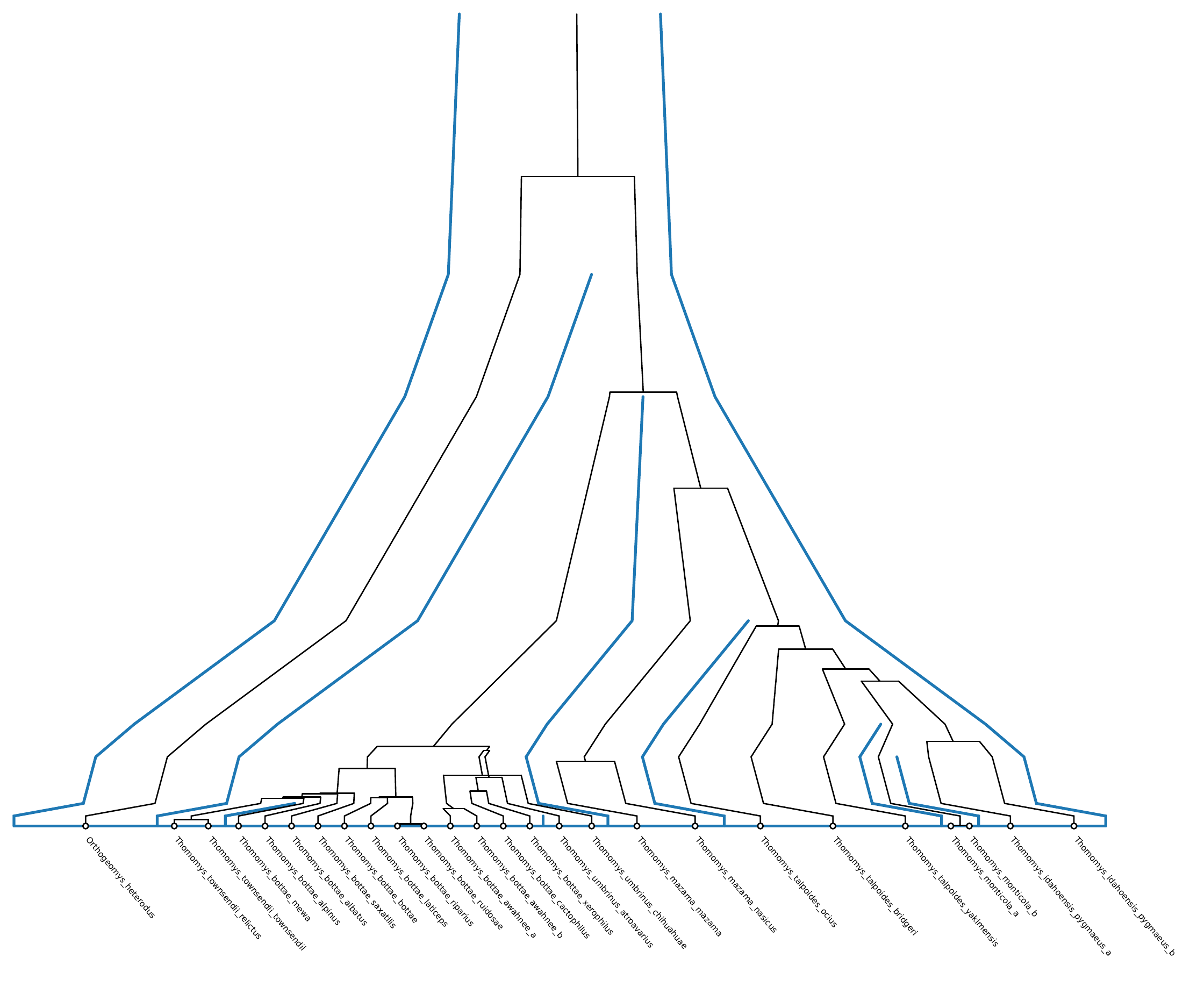}
		\caption{Proportional drawing style.}
	\end{subfigure}
  \caption{Drawings of a Gopher instance.}
  \label{fig:gopher}
\end{figure}

\begin{figure}[htb]
  \centering
    \begin{subfigure}[t]{\linewidth}
		\centering
 		\includegraphics[width=0.9\linewidth]{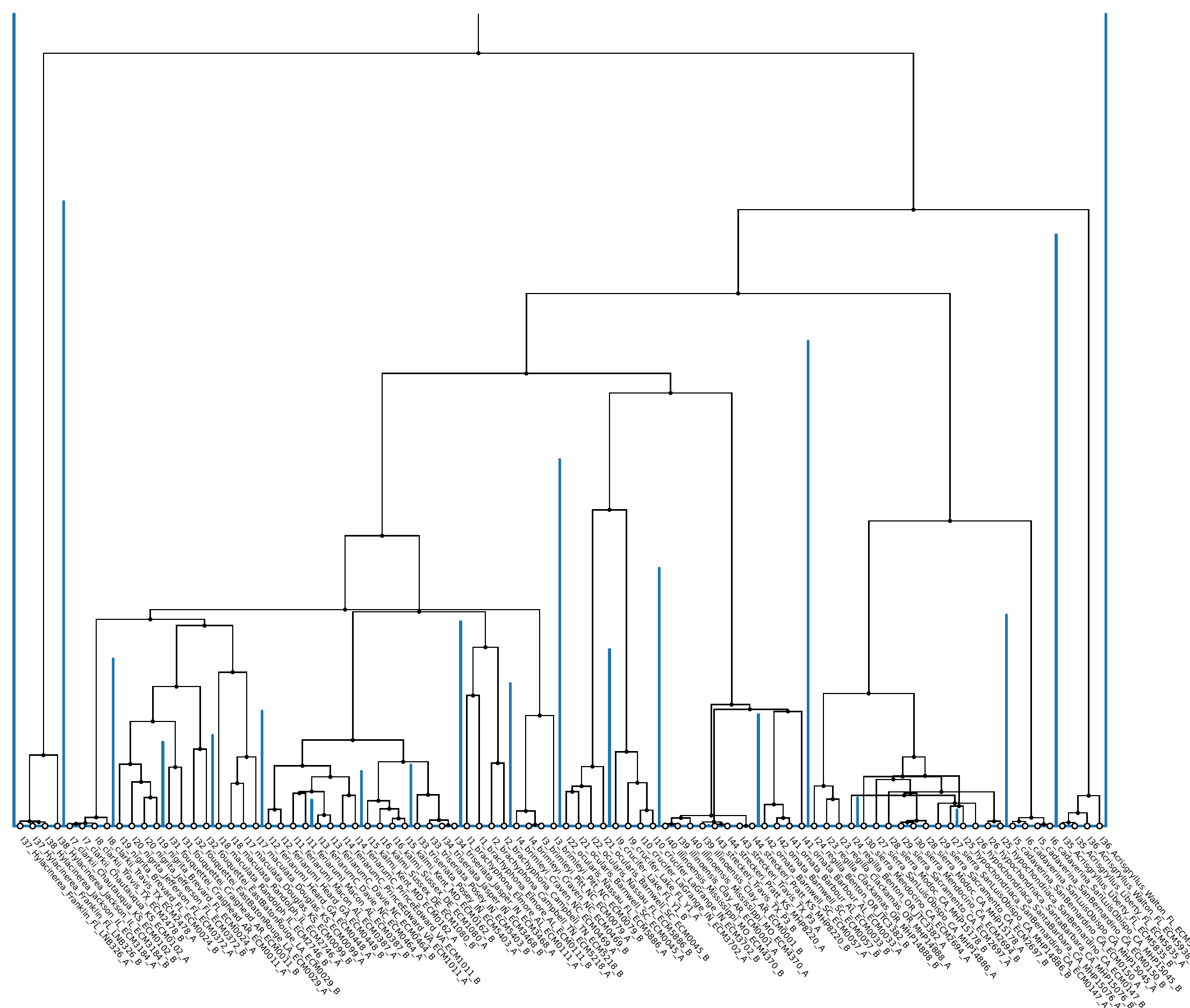} 
		\caption{Rectangular drawing style.}
	\end{subfigure}
	\begin{subfigure}[t]{\linewidth}
		\centering
 		\includegraphics[width=0.9\linewidth]{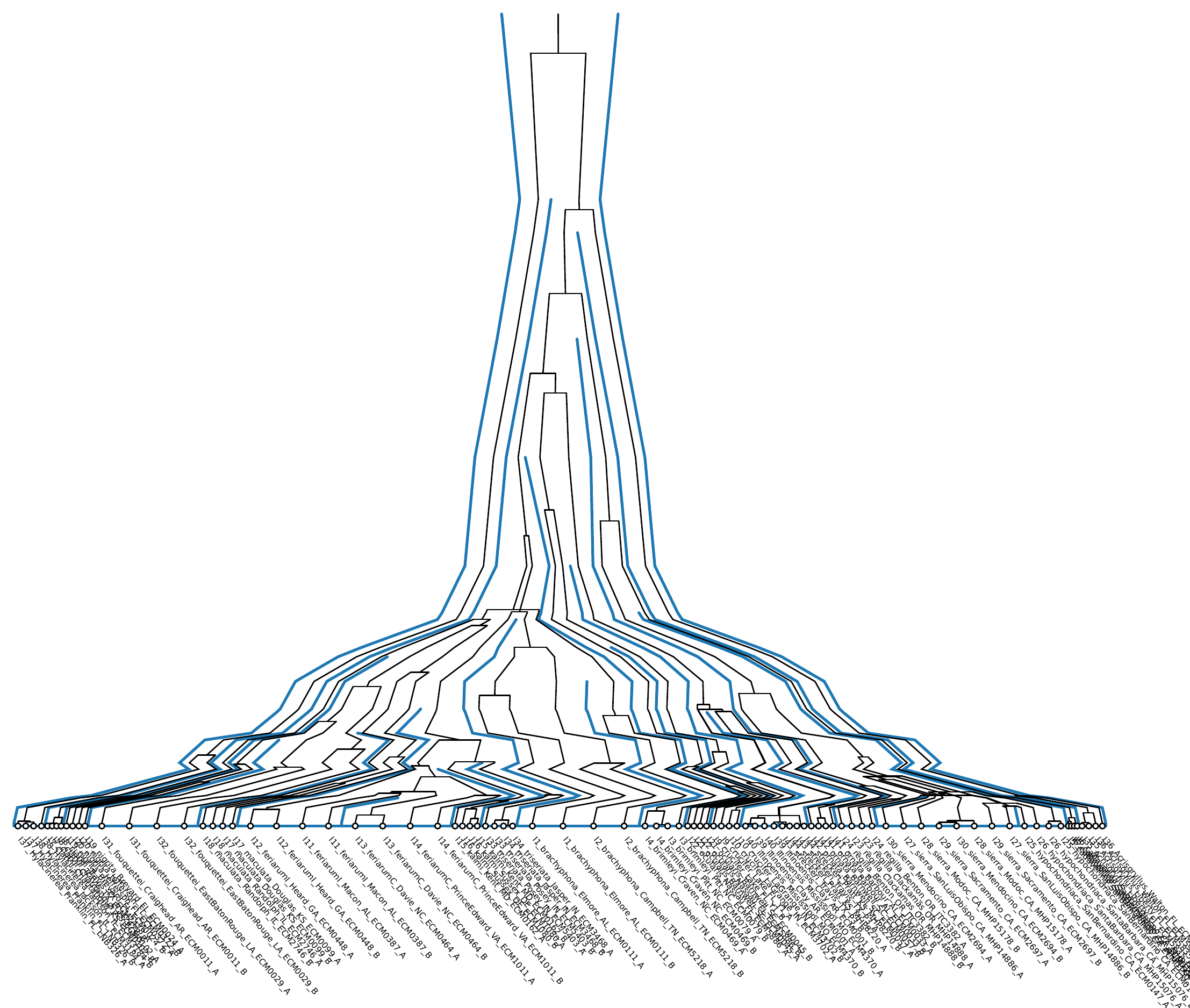}
		\caption{Proportional drawing style.}
	\end{subfigure}
  \caption{Drawings of a Barrow instance.}
  \label{fig:gopher}
\end{figure}

\begin{figure}[htb]
  \centering
    \begin{subfigure}[t]{\linewidth}
		\centering
 		\includegraphics[width=0.9\linewidth]{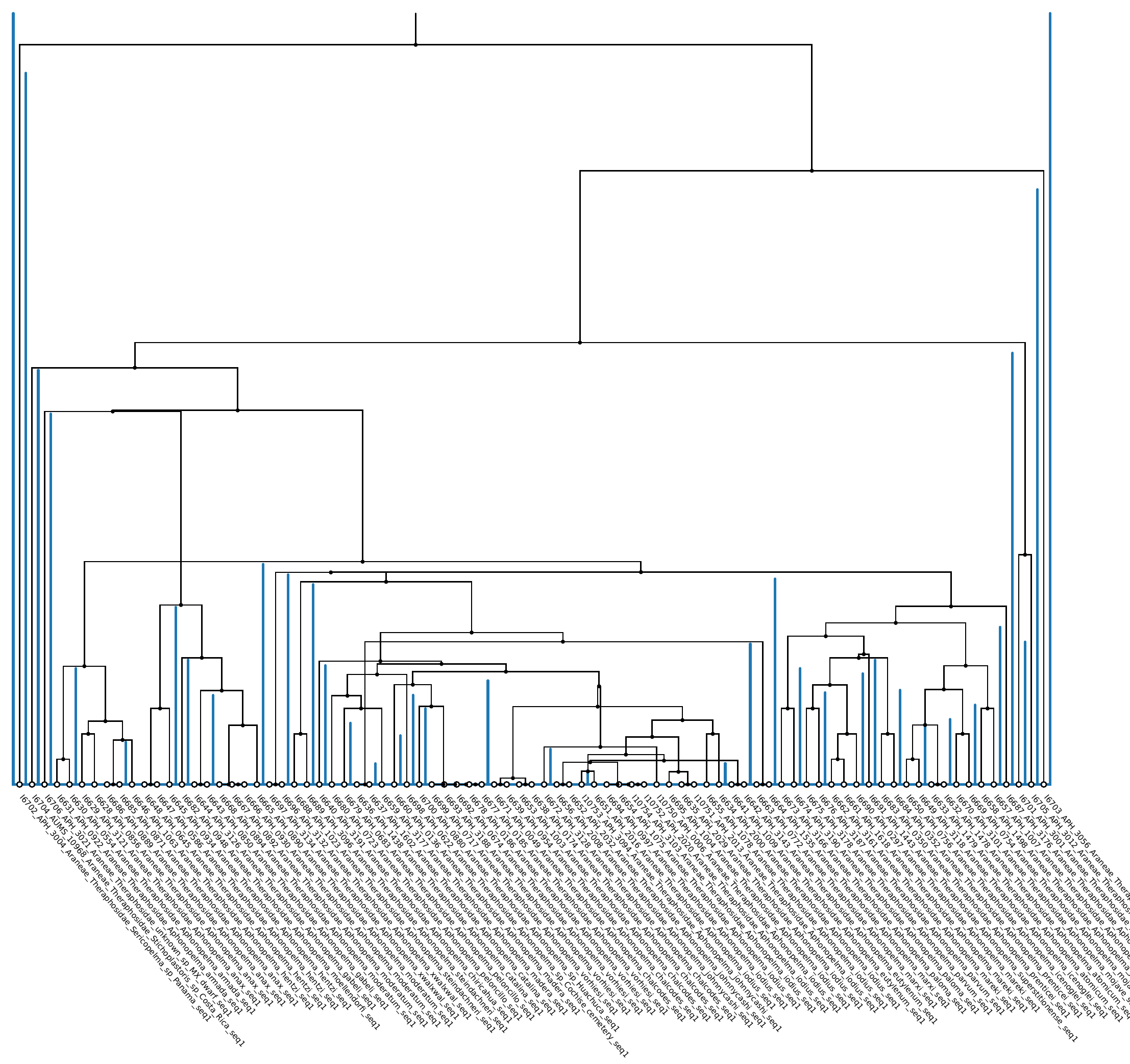} 
		\caption{Rectangular drawing style.}
	\end{subfigure}
	\begin{subfigure}[t]{\linewidth}
		\centering
 		\includegraphics[width=0.9\linewidth]{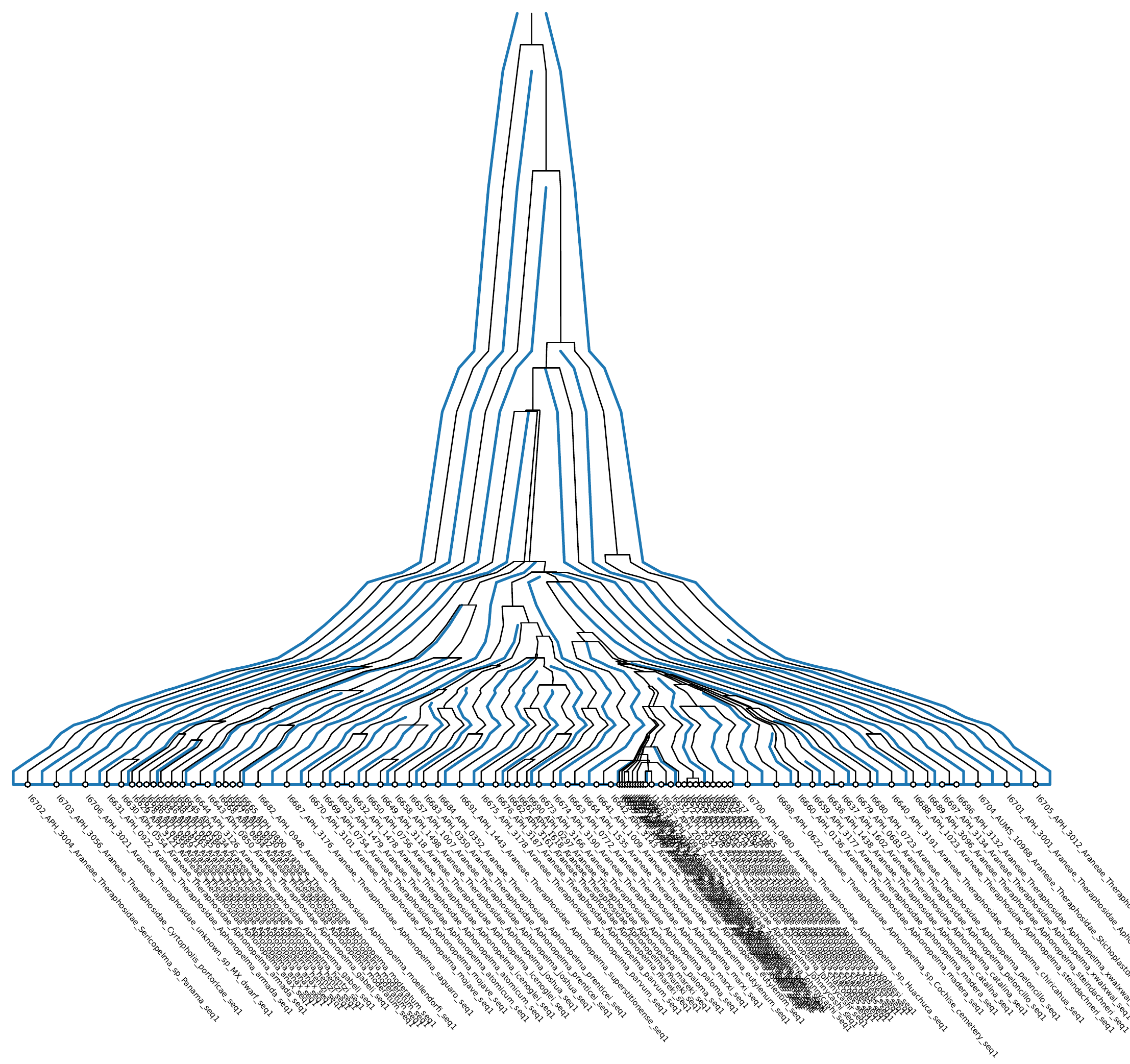}
		\caption{Proportional drawing style.}
	\end{subfigure}
  \caption{Drawings of a Hamilton instance.}
  \label{fig:gopher}
\end{figure}

\end{document}